\newtheorem{theorem}{Theorem}
\theoremstyle{plain}
\newtheorem{definition}{Definition}
\newtheorem{lemma}{Lemma}
\newtheorem{notation}{Notation}
\newtheorem{proposition}{Proposition}
\newtheorem{remark}{Remark}
\numberwithin{equation}{section}
\begin{document}
\title[Non-Archimedian Replicator Dynamics]{ Eigen's Paradox \ and the quasispecies model in a \ Non-Archimedean Framework}
\author[Z\'{u}\~{n}iga-Galindo]{W. A. Z\'{u}\~{n}iga-Galindo}
\address{University of Texas Rio Grande Valley\\
School of Mathematical \& Statistical Sciences\\
One West University Blvd\\
Brownsville, TX 78520, United States }
\email{wilson.zunigagalindo@utrgv.edu}
\thanks{The author was partially supported by the Lokenath Debnath Endowed Professorship.}
\subjclass[2000]{Primary 92D15, 92D25; Secondary 82B20, 32P05}
\keywords{Darwinian evolution, Eigen's paradox, pseudo-differential evolution equations,
$p$-adic analysis, $p$-adic wavelets, $p$-adic heat kernels.}

\begin{abstract}
In this article we present a new $p$-adic generalization of the Eigen-Schuster
model where the genomes (sequences) are represented by words written in the
alphabet $\left\{  0,1,\ldots,p-1\right\}  $, where $p$ is a prime number,
with a time variable length. The time evolution of the concentration of a
given sequence is controlled by a $p$-adic evolution equation. The long term
behavior of the concentration of a sequence depends on a fitness function $f$,
a mutation measure $Q$, and an initial concentration distribution. The new
model provides essentially two types of asymptotic scenarios for evolution. If
the complexity of sequences grows at the right pace, then in the long term the
survival is assured. This agrees with the fact that larger genome size
improves the replication fidelity. In other case, the sequences cannot copy
themselves with sufficiently fidelity, and in the long term they will not
survive. Eigen's paradox is one, among the infinitely many, possible scenarios
of the evolution in the long term. The mathematical formulation of this fact
requires solving the Cauchy problem for the $p$-adic Eigen-Schuster model in a
rigorous mathematical way. This requires imposing restrictions on the fitness
function and on the mutation measure, among other conditions. The study of the
mentioned initial value problem requires techniques of $p$-adic wavelets and
$p$-adic heat kernels developed in the last 35 years.

\end{abstract}
\maketitle

\section{Introduction}

A central problem in the origin of life is the reproduction of primitive
organisms with sufficient fidelity to maintain the information coded in the
primitive genomes. Assuming\ that genomes have constant length,\ and the
existence of independent point mutations, that is, assuming that during the
replication process each nucleotide has a fixed probability of being replaced
for another nucleotide, and that this probability is independent \ of all
other nucleotides, Eigen discovered that the mutation process places a limit
on the number of nucleotides that a genome may have, see e.g. \cite{Eigen1971}%
, \cite{Eigen et al}, \cite{Nowak}, \cite{SchusterPeter}, \cite{Tannembaum et
al}. This critical size is called the error threshold of replication. The
genomes larger than this error threshold will be unable to copy themselves
with sufficiently fidelity, and the mutation process will destroy the
information in subsequent generations of these genomes. This contradicts the
existence of large stable living organisms on earth. To create more complex
organisms (that is to have more genetic complexity), it is necessary to encode
more information in larger genomes by using a replication mechanism with
greater fidelity. But the information for creating error-correcting mechanisms
(enzymes) should be encoded in the genomes, which have a limited size. Hence,
we arrive to the `Catch-22' or Eigen's paradox of the origin of life:
\textquotedblleft no large genome without enzymes, and no enzymes without a
large genome,\textquotedblright\ see \cite[p. 317]{Maynad Smith},
\cite{Szat-tree}.

In \cite{Sch} Scheuring and in \cite{Poole etal}, Poole, Jeffares, Penny
proposed biological hypotheses to explain Eigen's paradox. In both works, the
authors pointed out that escaping the Catch-22 (Eigen's paradox) requires that
the length of the genomes must growth at the right pace. The standard
Eigen-Schuster model is not compatible with the assumption that the length of
the sequences\ growth in time. First, the description of the space of
sequences as a metric space of binary sequences endowed with the Hamming
distance becomes useless. The Hamming distance make sense only when the length
of the sequences is finite and fixed. The classical realization of the
Eigen-Schuster model as a system of ODEs in $\mathbb{R}^{n}$ is useless,
because $n$ is the number of sequences (chemical species), if the length of
the sequences growth in time, then the number of chemical species grows and
consequently $n$ must grow in time. In conclusion, dealing with the assumption
that the length of the sequences growth in time requires a new mathematical approach.

In \cite{Zuniga-JPhA}, the author introduced a new non-Archimedean model of
evolutionary dyna\-mics, where the genomes (sequences) are represented by
$p$-adic numbers. The length of the sequences varies in time, and it is not
bounded. The sequences are organized in a tree-like structures resembling the
phylogenetic trees. There is a natural distance between two sequences which
depends on the first common ancestor of the given sequences. The space of all
possible sequences has a fractal nature. The time evolution of the
concentration of a sequence is controlled by a $p$-adic evolution equation,
which is a $p$-adic continuous version of the classical Eigen-Schuster model.
This equation depends on a fitness function $f$ and on mutation measure $Q$.
For some families of mutation measures and by using a $p$-adic version of the
Maynard Smith Ansatz, in \cite{Zuniga-JPhA}, the author showed the existence
of threshold function $M_{c}(f,Q)$, such that the long term survival of a
sequence requires that its length grows faster than $M_{c}(f,Q)$. This implies
that Eigen's paradox does not occur if the complexity of genomes grows at the
right pace. In \cite{Zuniga-JPhA} a heuristic approach to the existence of
quasispecies was presented, the purpose of this work is to provide a rigorous
mathematical analysis of this model, that allows explaining the Eigen paradox.
This requires solving, in a rigorous mathematical way, the Cauchy problem
attached to our $p$-adic Eigen-Schuster model under general assumptions on the
fitness function, the mutation measure, and the initial datum. Also, it is
needed to explain how the old Eigen-Schuster model fits in the new framework.

In the non-Archimedean model a sequence (genome) is specified by a $p$-adic
number:%
\begin{equation}
x=x_{-k}p^{-k}+x_{-k+1}p^{-k+1}+\ldots+x_{0}+x_{1}p+\ldots,\text{ with }%
x_{-k}\neq0\text{,} \label{p-adic-number}%
\end{equation}
where $p$ denotes a fixed prime number, and the $x_{j}$s \ are $p$-adic
digits, i.e. numbers in the set $\left\{  0,1,\ldots,p-1\right\}  $. The set
of all possible sequences constitutes the field of $p$-adic numbers
$\mathbb{Q}_{p}$. There are natural field operations, sum and multiplication,
on series of form (\ref{p-adic-number}), see e.g. \cite{Koblitz}. There is
also a natural norm in $\mathbb{Q}_{p}$ defined as $\left\vert x\right\vert
_{p}=p^{k}$, for a nonzero $p$-adic number $x$ of the form
(\ref{p-adic-number}). The field of $p$-adic numbers with the distance induced
by $\left\vert \cdot\right\vert _{p}$ is a complete ultrametric space. The
ultrametric property refers to the fact that $\left\vert x-y\right\vert
_{p}\leq\max\left\{  \left\vert x-z\right\vert _{p},\left\vert z-y\right\vert
_{p}\right\}  $ for any $x$, $y$, $z$ in $\mathbb{Q}_{p}$.

The classical Eigen-Schuster equation describes the concentration $X\left(
I,t\right)  $ of a sequence $I$ at time $t$. In the non-Archimedean approach
the sequence $I$ is codified as a $p$-adic number of the form%
\[
I=I_{-M}p^{-M}+I_{-M+1}p^{-M+1}+\cdots+I_{0}+\cdots+I_{M-1}p^{M-1},
\]
where $M\in\mathbb{N\smallsetminus}\left\{  0\right\}  $ and the set of
sequences is $G_{M}=p^{-M}\mathbb{Z}_{p}/p^{M}\mathbb{Z}_{p}$. In the limit
when $M$ tends to infinity $G_{M}$ becomes $\mathbb{Q}_{p}$. The $p$-adic
Eigen-Schuster model is given by
\begin{gather*}
\frac{d}{dt}X\left(  I,t\right)  =\frac{1}{C}\sum_{J\in G_{M}}Q\left(
I,J\right)  f\left(  J\right)  X(J,t)-\Phi_{M}\left(  t\right)  X\left(
J,t\right)  \text{, }I\in G_{M},\text{ }t>0\text{, where}\\
\Phi_{M}\left(  t\right)  =p^{-M}\sum_{I\in G_{M}}f\left(  I\right)
X(I,t),\text{ \ }%
\end{gather*}
here $\left[  Q\left(  I,J\right)  \right]  _{I,J}$ is the mutation matrix,
$f\left(  J\right)  $ is the fitness of the sequence $J$, and $C$, $p^{-M}$
are scale constants. In the limit $M$ tends to infinity, $I$ becomes a
continuous $p$-adic variable denoted as $x$, and the model takes de form%
\begin{gather*}
\frac{\partial X\left(  x,t\right)  }{\partial t}=%
%TCIMACRO{\dint \limits_{\mathbb{Q}_{p}}}%
%BeginExpansion
{\displaystyle\int\limits_{\mathbb{Q}_{p}}}
%EndExpansion
Q(x,y)f(y)X\left(  y,t\right)  dy-\Phi\left(  t\right)  X\left(  x,t\right)
\text{, }x\in\mathbb{Q}_{p}\text{, }t\in\mathbb{R}_{+}\text{, where}\\
\Phi\left(  t\right)  =%
%TCIMACRO{\dint \limits_{\mathbb{Q}_{p}}}%
%BeginExpansion
{\displaystyle\int\limits_{\mathbb{Q}_{p}}}
%EndExpansion
f\left(  y\right)  X\left(  y,t\right)  dy\text{\ \ for }t\geq0\text{.}%
\end{gather*}
The integral is with respect to the Haar measure of $\mathbb{Q}_{p}$. This
reasoning is not possible if we use Riemann integrals. This limit can be
formulated in a rigorous mathematical way, see e.g. \cite{zuniga-Nonlieal}.
The above model describes the time evolution of the concentration $X\left(
x,t\right)  $ of the sequence $x$, which has a\ length varying in time. A
fundamental observation is that the error threshold phenomenon occurs
independently of the topology of the space of sequences, see Section
\ref{Sub_Section_MS}.

In Section \ref{Section_3}, we introduce a very general class of models where
the mutation measure $Q(x,y,t)dy$ and the fitness function $f\left(
y,t\right)  $ depend on the time $t$. The study of the Cauchy problems for
these equations is an open problem. The case in which the fitness function is
supported in the unit ball, and the mutation measure has the form $Q(x-y)dy$
is fully studied in in this article.

We denote by $\mathbb{Z}_{p}$ the unit ball, which consists of the all the
sequences with expansions of the form (\ref{p-adic-number}) with $-k\geq0$.
The fitness landscape is given by a test function $f:$ $\mathbb{Z}%
_{p}\rightarrow\mathbb{R}_{+}$, which means that $f$ is a locally constant
function with compact support. With respect to the mutation mechanism, we only
assume the existence of a mutation measure $Q_{0}\left(  \left\vert
x\right\vert _{p}\right)  dx$, where $Q_{0}:$ $\mathbb{R}_{+}\rightarrow
\mathbb{R}_{+}$, and $dx$ is the normalized Haar measure of the group $\left(
\mathbb{Z}_{p},+\right)  $, with $\int_{\mathbb{Z}_{p}}Q_{0}\left(  \left\vert
x\right\vert _{p}\right)  dx=1$, such that the probability that a sequence $x$
mutates into a sequence \ belonging to the set $B$ is given by $\int_{B}$
$Q_{0}\left(  \left\vert x-y\right\vert _{p}\right)  dy$. \ In our model the
concentration $X\left(  x,t\right)  \in\left[  0,1\right]  $ of the sequence
$x$ at the time $t$ is controlled by the following evolution equation:%
\begin{equation}
\left\{
\begin{array}
[c]{l}%
\frac{\partial X\left(  x,t\right)  }{\partial t}=Q_{0}\left(  \left\vert
x\right\vert _{p}\right)  \ast\left\{  f\left(  \left\vert x\right\vert
_{p}\right)  X\left(  x,t\right)  \right\}  -\Phi\left(  t\right)  X\left(
x,t\right)  ,\\
\\
X\left(  x,0\right)  =X_{0}(x)\text{, }%
%TCIMACRO{\dint \limits_{\mathbb{Z}_{p}}}%
%BeginExpansion
{\displaystyle\int\limits_{\mathbb{Z}_{p}}}
%EndExpansion
X_{0}(x)dx=1,
\end{array}
\right.  \label{EC_0}%
\end{equation}
where $x\in\mathbb{Z}_{p},t\geq0$, and $\Phi\left(  t\right)  =\int
_{\mathbb{Z}_{p}}f\left(  \left\vert y\right\vert _{p}\right)  X\left(
y,t\right)  dy$. The term
\[
\boldsymbol{W}_{0}X\left(  x,t\right)  =Q_{0}\left(  \left\vert x\right\vert
_{p}\right)  \ast\left\{  f\left(  \left\vert x\right\vert _{p}\right)
X\left(  x,t\right)  \right\}
\]
represents the rate at which the sequences are mutating into the sequence $x$.
We assume that the replication reactions occur in a chemostat, see e.g.
\cite{Tannembaum et al}, which is a device that allows the maintenance of a
constant population size, this mechanism is implemented by using the term
$-\Phi\left(  t\right)  X\left(  x,t\right)  $.

We now discuss briefly the main results presented in this article. To study
the Cauchy problem (\ref{EC_0}), we first construct an function space
invariant under operator $\boldsymbol{W}_{0}$, and then use $p$-adic wavelets
to construct an explicit solution of the Cauchy problem using the classical
method of separation of variables. The construction of the invariant space for
$\boldsymbol{W}_{0}$ \ was not considered in \cite{Zuniga-JPhA}. We assume
that the fitness function is a test function of the form $f(x)=\sum_{J\in
G_{M}}f(J)\Omega\left(  p^{M}\left\vert x-J\right\vert _{p}\right)  $, where
$f(J)>0$, and $\Omega\left(  p^{M}\left\vert x-J\right\vert _{p}\right)  $ is
the characteristic function of the ball $J+p^{M}\mathbb{Z}_{p}$ and
$\mathbb{Z}_{p}=%
%TCIMACRO{\tbigsqcup \nolimits_{J\in G_{M}}}%
%BeginExpansion
{\textstyle\bigsqcup\nolimits_{J\in G_{M}}}
%EndExpansion
\left(  J+p^{M}\mathbb{Z}_{p}\right)  $. The ball $J+p^{M}\mathbb{Z}_{p}$ is a
cloud of mutants around the master sequence $J$, any sequence in this cloud
reproduces at a rate of $f(J)$ copies per unit of time. We denote by
$\mathcal{D}_{M}$ the $\mathbb{R}$-vector space spanned by the functions
$\Omega\left(  p^{M}\left\vert x-I\right\vert _{p}\right)  ,$ $I\in G_{M}$.
The finite dimensional vector space $\mathcal{D}_{M}$ is invariant under
$\boldsymbol{W}_{0}$, then its restriction to $\mathcal{D}_{M}$ represented by
a matrix $\mathbb{W}^{0}=\left[  \mathbb{W}_{I,J}^{0}\right]  _{I,J\in G_{M}}%
$. We denote by $L^{2}\left(  J+p^{M}\mathbb{Z}_{p}\right)  $ the $\mathbb{C}%
$-vector space of square-integrable functions defined on the ball
$J+p^{M}\mathbb{Z}_{p}$. The space $L(\boldsymbol{W}_{0})=\mathcal{D}_{M}%
%TCIMACRO{\tbigoplus }%
%BeginExpansion
{\textstyle\bigoplus}
%EndExpansion%
%TCIMACRO{\tbigoplus _{J\in G_{M}}}%
%BeginExpansion
{\textstyle\bigoplus_{J\in G_{M}}}
%EndExpansion
L^{2}\left(  J+p^{M}\mathbb{Z}_{p}\right)  $ is invariant under
$\boldsymbol{W}_{0}$, see Lemma \ref{Lemma_1}.

We solve the Cauchy problem (\ref{EC_0}) in $L(\boldsymbol{W}_{0})$:%
\begin{gather}
X\left(  x,t\right)  =\frac{\left(  e^{t\mathbb{W}^{0}}\left[  C_{I}%
^{0}(0)\right]  _{I\in G_{M}}\right)  \left[  \Omega\left(  p^{M}\left\vert
x-I\right\vert _{p}\right)  \right]  _{I\in G_{M}}^{T}}{\overline{Y\left(
t\right)  }}+\nonumber\\%
%TCIMACRO{\dsum \limits_{I\in G_{M}}}%
%BeginExpansion
{\displaystyle\sum\limits_{I\in G_{M}}}
%EndExpansion
\text{\ }\sum\limits_{\text{supp}\Psi_{rnj}\subseteq I+p^{M}\mathbb{Z}_{p}%
}\frac{\text{ }e^{t\widehat{Q_{0}}\left(  p^{1-r}\right)  f(I)}%
\operatorname{Re}\left(  C_{rjn}^{I}\left(  0\right)  \Psi_{rnj}\left(
x\right)  \right)  }{\overline{Y\left(  t\right)  }},\label{Solution}%
\end{gather}
where $\Psi_{rnj}\left(  x\right)  $s are wavelet basis of $L^{2}\left(
\mathbb{Q}_{p}\right)  $, each of these functions has average zero, i.e.
\begin{equation}%
%TCIMACRO{\dint \limits_{\mathbb{Q}_{p}}}%
%BeginExpansion
{\displaystyle\int\limits_{\mathbb{Q}_{p}}}
%EndExpansion
\Psi_{rnj}\left(  x\right)  dx=0,\label{average_zer}%
\end{equation}
and $\widehat{Q_{0}}\left(  \xi\right)  $ is the Fourier transform of the
radial function $Q_{0}\left(  \left\vert x\right\vert _{p}\right)  $. Now the
initial datum $X\left(  x,0\right)  $ is determined by an element of the set
of sequences%
\[
\mathcal{S}=%
%TCIMACRO{\tbigsqcup \limits_{J\in G_{M}}}%
%BeginExpansion
{\textstyle\bigsqcup\limits_{J\in G_{M}}}
%EndExpansion
\left\{  C_{J}^{0}(0)\right\}
%TCIMACRO{\tbigsqcup }%
%BeginExpansion
{\textstyle\bigsqcup}
%EndExpansion%
%TCIMACRO{\tbigsqcup \limits_{I\in G_{M}}}%
%BeginExpansion
{\textstyle\bigsqcup\limits_{I\in G_{M}}}
%EndExpansion
\left\{  C_{rjn}^{I}\left(  0\right)  \right\}  _{rnj}.
\]
If we take the initial datum determined by the conditions $C_{rjn}^{I}\left(
0\right)  =0$ for any $I$, $rnj$ and $C_{J}^{0}(0)\neq0$ for some $J$, i.e.
$X\left(  x,0\right)  \in\mathcal{D}_{M}$, then the condition
(\ref{average_zer}) implies that%
\[
X\left(  x,t\right)  =\frac{\left(  e^{t\mathbb{W}^{0}}\left[  C_{I}%
^{0}(0)\right]  _{I\in G_{M}}\right)  \left[  \Omega\left(  p^{M}\left\vert
x-I\right\vert _{p}\right)  \right]  _{I\in G_{M}}^{T}}{\overline{Y\left(
t\right)  }},
\]
see Theorem \ref{Theorem_0}. A key observation is that given $t\geq0$, the
value of the function $X\left(  x,t\right)  $ depends only on the first $M$
digits of $x=x_{0}+x_{1}p+\ldots+x_{M-1}p^{M-1}+\ldots$, since $M$ is fixed,
in this model the length of the sequences does not change in time. Since
$\mathbb{W}^{0}$ is a real symmetric matrix, it is diagonalizable and all its
eigenvalues are real. In this case $\lim_{t\rightarrow\infty}X\left(
x,t\right)  $ exists and it is controlled by the largest eigenvalue of
$\mathbb{W}^{0}$. This situation corresponds to \textit{the survival of the
fitter}. This is the typical scenario predicted by the classical
Eigen-Schuster model. In this scenario the Eigen paradox may happen. This
result says that the classical Eigen-Schuster description of evolution can be
obtained using the $p$-adic Eigen-Schuster \ model. Our previous publication
\cite{Zuniga-JPhA} does not contain a similar result.

To escape to the Eigen paradox the length of the sequences must growth, see
Section \ref{Sub_Section_MS}, which requires that some of the oscillatory
terms (those involving the $\Psi_{rnj}\left(  x\right)  $s) must be preserved
in the\ long term in (\ref{Solution}). If these oscillatory terms do not
vanish in the long term, we have a cloud of sequences in $\mathbb{Z}_{p}$
(with time variable length, including sequences of infinite length) evolving
according to the basic Darwinian principles. Intuitively, one must show that
the function $\lim_{t\rightarrow\infty}X\left(  x,t\right)  $ depends on
infinitely many digits in the $p$-adic expansion of $x$. This is a non-trivial
mathematical task that requires suitable hypotheses on the mutation measure,
and surprisingly non-trivial results on stochastic processes on $\mathbb{Q}%
_{p}$. In our previous publication \cite{Zuniga-JPhA} these matters were not considered.

We pick as a mutation measure a family of Gibbs type measures of the form
$Q_{0}\left(  \left\vert x\right\vert _{p}\right)  =\mathcal{N}\Omega\left(
\left\vert x\right\vert _{p}\right)  \exp(-\sigma\left\vert x\right\vert
_{p}^{\alpha})$, where $\alpha$, $\sigma$ are positive parameters,
$\Omega\left(  \left\vert x\right\vert _{p}\right)  $ is the characteristic
function of the unit ball, and $\mathcal{N}$ is a normalization constant. The
Fourier transform $Q_{0}\left(  \left\vert x\right\vert _{p}\right)  $ of
satisfies $\mathcal{F}_{x\rightarrow\xi}(Q_{0}\left(  \left\vert x\right\vert
_{p};\sigma,\alpha\right)  )=\mathcal{N}Z(\xi;\sigma,\alpha)\ast\Omega\left(
\left\vert \xi\right\vert _{p}\right)  $, where $Z(\xi;\sigma,\alpha)$ is the
classical $p$-adic heat kernel, which is the transition density of a Markov
process in $\mathbb{Q}_{p}$, see e.g. \cite{Bendikov et al}, \cite{KKZuniga},
\cite{V-V-Z}, \cite{Zuniga-LNM-2016}. We use the extensively the results about
the behavior of $Z(\xi;\sigma,\alpha)$ around the origin and at the infinity.

We introduce the following two conditions:%
\[
\text{Hypothesis A:\hspace{0.8in}}%
%TCIMACRO{\dint \limits_{p^{M}\mathbb{Z}_{p}}}%
%BeginExpansion
{\displaystyle\int\limits_{p^{M}\mathbb{Z}_{p}}}
%EndExpansion
Q_{0}\left(  \left\vert z\right\vert _{p}\right)  dz\in\left(  \frac{1}%
{2},1\right)  .
\]%
\[
\text{Hypothesis B:\hspace{0.8in} }\widehat{Q_{0}}\left(  p^{1-r_{0}}\right)
f(I_{0})>\mu_{\max},
\]
where $\mu_{\max}$ is the largest eigenvalue of $\mathbb{W}^{0}$. The
hypothesis A says that the probability that a sequence belonging to
$I+p^{M}\mathbb{Z}_{p}$ mutates into a sequence belonging to $%
%TCIMACRO{\tbigsqcup \nolimits_{J\neq I}}%
%BeginExpansion
{\textstyle\bigsqcup\nolimits_{J\neq I}}
%EndExpansion
\left(  J+p^{M}\mathbb{Z}_{p}\right)  $ is less than $\frac{1}{2}$, for any
$I\in G_{M}$. It is important to mention here that Hypothesis A also appears
in the Maynard Smith ansatz, see \cite{Maynad Smith}, \cite{Szat-PTRSB},
\cite{Zuniga-JPhA}.

We denote by $\mathcal{S}_{0}$ the subset of $\mathcal{S}$ consists of the
sequences satisfying $C_{rjn}^{I}\left(  0\right)  \neq0$ for some $I$, $rnj$
and $C_{J}^{0}(0)\neq0$ for some $J$. We show the existence of $\sigma_{\max}$
and $M=M(\sigma,\alpha)$, such that if $\ X(x,0)\in\mathcal{S}_{0}$, and
$Q_{0}\left(  \left\vert x\right\vert _{p};\sigma,\alpha\right)  $ satisfies
that $\sigma\in\left(  0,\sigma_{\max}\right)  $, $\alpha\in\left(
0,\infty\right)  $, and $\int_{p^{M}\mathbb{Z}_{p}}Q_{0}\left(  \left\vert
x\right\vert _{p}\right)  dx\in\left(  \frac{1}{2},1\right)  $, then the
Cauchy problem (\ref{EC_0}) has a solution with a non-trivial oscillatory
behavior at infinity. In this case, we say that (\ref{EC_0}) admits a
\textit{quasispecies solution}, see Theorems \ref{Theorem_D}, \ref{Theorem_E}.
A quasispecies is a large group of related genotypes that exist in an
environment of high mutation rate at stationary state, where a large fraction
of offspring are expected to contain one or more mutations relative to the
parent, \cite{Eigen et al}. The $p$-adic quasispecies correspond to a profile
of a solution of the Cauchy problem (\ref{eq6}) when $t$ tends to infinity.

As a generalization of the classical Eigen-Schuster model, our $p$-adic model
encodes the basic principles of Darwinian evolution. The long term survival of
a sequence under the selection pressure depends on the interaction of the
fitness function, the mutation measure and the initial concentration of the
sequences. Assuming that the mutation measure is a Gibbs measure of type
$\Omega\left(  \left\vert x\right\vert _{p}\right)  \exp(-\sigma\left\vert
x\right\vert _{p}^{\alpha})$, we establish the existence of scenarios where
the long-term concentration of sequences with arbitrary length does not
vanish. Based on the Maynard Smith ansatz, see Section \ref{Sub_Section_MS},
we interpret this situation as that long-term survival requires that the
complexity of the genomes grow at the right pace. This agrees with the fact
that larger genome size improves the replication fidelity. In other case, the
Eigen paradox occurs, which means that the sequences are unable to copy
themselves with sufficiently fidelity, and thus in the long term these
sequences will not survive. The Eigen paradox is a possible scenario, among
infinitely many, in our $p$-adic evolution model.

It is important to mention here that \ Avetisov and Zhuravlev pointed out
using $1D$ $p$-adic diffusion equation in biological evolution, see
\cite{Av-Zhu}-\cite{Av-Zhu-2}. This approach does not allow to analyze
directly the error catastrophe in the standard sense. On the other hand, the
use of $p$-adic numbers in DNA models and analysis of the genetic code is
well-known see e.g. \cite{Dra-Kh-K-V}, \cite{Dragovich}, \cite{KKGentic}, and
the references therein.

The article is organized a s follows. In Section \ref{Section_2} we review the
essential ideas about $p$-adic analysis. In Section \ref{Section_3}, we review
and extend the $p$-adic version of Eigen-Schuster model introduced in
\cite{Zuniga-JPhA}. In the extended models the mutation measure is a
transition density function $Q\left(  x,y,t\right)  \geq0$ for $x$,
$y\in\mathbb{Q}_{p}$, $t>0$, \ of a Markov process. The study of the Cauchy
problem for these models is an open problem. In Section \ref{Section_4}, we
study the existence of a solution for the Cauchy problem considered in the
introduction. We use the classical method of separation of variables and
$p$-adic wavelets. In Section \ref{Section_5}, we show the existence of
$p$-adic quasispecies and discuss the solution of Eigen's paradox.

\section{\label{Section_2}$p$-Adic Analysis: Essential Ideas}

In this Section, we collect some basic results on $p$-adic analysis that we
use through the article. For a detailed exposition the reader may consult
\cite{Alberio et al}, \cite{Kochubei}, \cite{Taibleson}, \cite{V-V-Z}.

\subsection{The field of $p$-adic numbers}

Along this article $p$ will denote a prime number. The field of $p-$adic
numbers $%
%TCIMACRO{\U{211a} }%
%BeginExpansion
\mathbb{Q}
%EndExpansion
_{p}$ is defined as the completion of the field of rational numbers
$\mathbb{Q}$ with respect to the $p-$adic norm $|\cdot|_{p}$, which is defined
as
\[
\left\vert x\right\vert _{p}=\left\{
\begin{array}
[c]{lll}%
0 & \text{if} & x=0\\
&  & \\
p^{-\gamma} & \text{if} & x=p^{\gamma}\frac{a}{b}\text{,}%
\end{array}
\right.
\]
where $a$ and $b$ are integers coprime with $p$. The integer $\gamma:=ord(x)
$, with $ord(0):=+\infty$, is called the\textit{\ }$p-$\textit{adic order of}
$x$.

Any $p-$adic number $x\neq0$ has a unique expansion of the form
\[
x=p^{ord(x)}\sum_{j=0}^{\infty}x_{j}p^{j},
\]
where $x_{j}\in\{0,\dots,p-1\}$ and $x_{0}\neq0$. By using this expansion, we
define \textit{the fractional part of }$x\in\mathbb{Q}_{p}$, denoted
$\{x\}_{p}$, as the rational number
\[
\left\{  x\right\}  _{p}=\left\{
\begin{array}
[c]{lll}%
0 & \text{if} & x=0\text{ or }ord(x)\geq0\\
&  & \\
p^{ord(x)}\sum_{j=0}^{-ord_{p}(x)-1}x_{j}p^{j} & \text{if} & ord(x)<0.
\end{array}
\right.
\]

For $r\in\mathbb{Z}$, denote by $B_{r}(a)=\{x\in%
%TCIMACRO{\U{211a} }%
%BeginExpansion
\mathbb{Q}
%EndExpansion
_{p};\left\vert x-a\right\vert _{p}\leq p^{r}\}$ \textit{the ball of radius
}$p^{r}$ \textit{with center at} $a\in%
%TCIMACRO{\U{211a} }%
%BeginExpansion
\mathbb{Q}
%EndExpansion
_{p}$, and take $B_{r}(0):=B_{r}$. The ball $B_{0}$ equals $\mathbb{Z}_{p}$,
\textit{the ring of }$p-$\textit{adic integers of }$%
%TCIMACRO{\U{211a} }%
%BeginExpansion
\mathbb{Q}
%EndExpansion
_{p}$. We also denote by $S_{r}(a)=\{x\in\mathbb{Q}_{p};|x-a|_{p}=p^{r}\}$
\textit{the sphere of radius }$p^{r}$ \textit{with center at} $a\in%
%TCIMACRO{\U{211a} }%
%BeginExpansion
\mathbb{Q}
%EndExpansion
_{p}$, and take $S_{r}(0):=S_{r}$. We notice that $S_{0}^{1}=\mathbb{Z}%
_{p}^{\times}$ (the group of units of $\mathbb{Z}_{p}$). The balls and spheres
are both open and closed subsets in $%
%TCIMACRO{\U{211a} }%
%BeginExpansion
\mathbb{Q}
%EndExpansion
_{p}$. In addition, two balls in $%
%TCIMACRO{\U{211a} }%
%BeginExpansion
\mathbb{Q}
%EndExpansion
_{p}$ are either disjoint or one is contained in the other.

The metric space $\left(
%TCIMACRO{\U{211a} }%
%BeginExpansion
\mathbb{Q}
%EndExpansion
_{p},\left\vert \cdot\right\vert _{p}\right)  $ is a complete ultrametric
space. As a topological space $\left(
%TCIMACRO{\U{211a} }%
%BeginExpansion
\mathbb{Q}
%EndExpansion
_{p},|\cdot|_{p}\right)  $ is totally disconnected, i.e. the only connected
\ subsets of $%
%TCIMACRO{\U{211a} }%
%BeginExpansion
\mathbb{Q}
%EndExpansion
_{p}$ are the empty set and the points. In addition, $\mathbb{Q}_{p}$\ is
homeomorphic to a Cantor-like subset of the real line, see e.g. \cite{Alberio
et al}, \cite{V-V-Z}. A subset of $\mathbb{Q}_{p}$ is compact if and only if
it is closed and bounded in $\mathbb{Q}_{p}$, see e.g. \cite[Section
1.3]{V-V-Z}, or \cite[Section 1.8]{Alberio et al}. The balls and spheres are
compact subsets. Thus $\left(
%TCIMACRO{\U{211a} }%
%BeginExpansion
\mathbb{Q}
%EndExpansion
_{p},|\cdot|_{p}\right)  $ is a locally compact topological space.

\begin{notation}
We will use $\Omega\left(  p^{-r}|x-a|_{p}\right)  $ to denote the
characteristic function of the ball $B_{r}(a)$. For more general sets, we
denote by $1_{A}$ the characteristic function of $A$.
\end{notation}

\subsection{The Haar measure}

Since $(\mathbb{Q}_{p},+)$ is a locally compact topological group, there
exists a Borel measure $dx$, called the Haar measure of $(\mathbb{Q}_{p},+)$,
unique up to multiplication by a positive constant, such that $\int_{U}dx>0$
for every non-empty Borel open set $U\subset\mathbb{Q}_{p}$, and satisfying
$\int_{E+z}dx=\int_{E}dx$ for every Borel set $E\subset\mathbb{Q}_{p}$, see
e.g. \cite[Chapter XI]{Halmos}. If we normalize this measure by the condition
$\int_{\mathbb{Z}_{p}}dx=1$, then $dx$ is unique. From now on we denote by
$dx$ the normalized Haar measure of $(\mathbb{Q}_{p},+)$.

\subsection{Some function spaces}

A complex-valued function $\varphi$ defined on $%
%TCIMACRO{\U{211a} }%
%BeginExpansion
\mathbb{Q}
%EndExpansion
_{p}$ is \textit{called locally constant} if for any $x\in%
%TCIMACRO{\U{211a} }%
%BeginExpansion
\mathbb{Q}
%EndExpansion
_{p}$ there exist an integer $l(x)\in\mathbb{Z}$ such that
\begin{equation}
\varphi(x+x^{\prime})=\varphi(x)\text{ for }x^{\prime}\in B_{l(x)}%
.\label{local_constancy_parameter}%
\end{equation}
\ A function $\varphi:%
%TCIMACRO{\U{211a} }%
%BeginExpansion
\mathbb{Q}
%EndExpansion
_{p}\rightarrow\mathbb{C}$ is called a \textit{Bruhat-Schwartz function (or a
test function)} if it is locally constant with compact support. In this case,
we can take $l=l(\varphi)$ in (\ref{local_constancy_parameter}) independent of
$x$. The largest of such integers is called \textit{the parameter of local
constancy} of $\varphi$. The $\mathbb{C}$-vector space of Bruhat-Schwartz
functions is denoted by $\mathcal{D}:=\mathcal{D}(%
%TCIMACRO{\U{211a} }%
%BeginExpansion
\mathbb{Q}
%EndExpansion
_{p},\mathbb{C})$. We will denote by $\mathcal{D}_{\mathbb{R}}:=\mathcal{D}(%
%TCIMACRO{\U{211a} }%
%BeginExpansion
\mathbb{Q}
%EndExpansion
_{p},\mathbb{R})$, the $\mathbb{R}$-vector space of test functions.

Given $\rho\in\lbrack1,\infty)$, we denote by $L^{\rho}:=L^{\rho}\left(
%TCIMACRO{\U{211a} }%
%BeginExpansion
\mathbb{Q}
%EndExpansion
_{p}\right)  :=L^{\rho}\left(
%TCIMACRO{\U{211a} }%
%BeginExpansion
\mathbb{Q}
%EndExpansion
_{p},dx\right)  ,$ the $%
%TCIMACRO{\U{2102} }%
%BeginExpansion
\mathbb{C}
%EndExpansion
-$vector space of all the complex valued functions $g$ satisfying $\int_{%
%TCIMACRO{\U{211a} }%
%BeginExpansion
\mathbb{Q}
%EndExpansion
_{p}}\left\vert g\left(  x\right)  \right\vert ^{\rho}dx<\infty$. The
corresponding $\mathbb{R}$-vector spaces are denoted as $L_{\mathbb{R}}^{\rho
}\allowbreak:=L_{\mathbb{R}}^{\rho}\left(
%TCIMACRO{\U{211a} }%
%BeginExpansion
\mathbb{Q}
%EndExpansion
_{p}\right)  =L_{\mathbb{R}}^{\rho}\left(
%TCIMACRO{\U{211a} }%
%BeginExpansion
\mathbb{Q}
%EndExpansion
_{p},dx\right)  $, $1\leq\rho<\infty$.

\subsection{Fourier transform}

Set $\chi_{p}(y)=\exp(2\pi i\{y\}_{p})$ for $y\in%
%TCIMACRO{\U{211a} }%
%BeginExpansion
\mathbb{Q}
%EndExpansion
_{p}$. The map $\chi_{p}(\cdot)$ is an additive character on $%
%TCIMACRO{\U{211a} }%
%BeginExpansion
\mathbb{Q}
%EndExpansion
_{p}$, i.e. a continuous map from $\left(
%TCIMACRO{\U{211a} }%
%BeginExpansion
\mathbb{Q}
%EndExpansion
_{p},+\right)  $ into $S$ (the unit circle considered as multiplicative group)
satisfying $\chi_{p}(x_{0}+x_{1})=\chi_{p}(x_{0})\chi_{p}(x_{1})$,
$x_{0},x_{1}\in%
%TCIMACRO{\U{211a} }%
%BeginExpansion
\mathbb{Q}
%EndExpansion
_{p}$. The additive characters of $%
%TCIMACRO{\U{211a} }%
%BeginExpansion
\mathbb{Q}
%EndExpansion
_{p}$ form an Abelian group which is isomorphic to $\left(
%TCIMACRO{\U{211a} }%
%BeginExpansion
\mathbb{Q}
%EndExpansion
_{p},+\right)  $, the isomorphism is given by $\xi\rightarrow\chi_{p}(\xi x)$,
see e.g. \cite[Section 2.3]{Alberio et al}.

If $f\in L^{1}$ its Fourier transform is defined by
\[
(\mathcal{F}f)(\xi)=%
%TCIMACRO{\dint \limits_{\mathbb{Q}_{p}}}%
%BeginExpansion
{\displaystyle\int\limits_{\mathbb{Q}_{p}}}
%EndExpansion
\chi_{p}(\xi x)f(x)dx,\quad\text{for }\xi\in%
%TCIMACRO{\U{211a} }%
%BeginExpansion
\mathbb{Q}
%EndExpansion
_{p}.
\]
We will also use the notation $\mathcal{F}_{x\rightarrow\xi}f$ and
$\widehat{f}$\ for the Fourier transform of $f$. The Fourier transform is a
linear isomorphism from $\mathcal{D}$ onto itself satisfying
\begin{equation}
(\mathcal{F}(\mathcal{F}f))(\xi)=f(-\xi), \label{FF(f)}%
\end{equation}
for every $f\in\mathcal{D},$ see e.g. \cite[Section 4.8]{Alberio et al}. If
$f\in L^{2}$, its Fourier transform is defined as
\[
(\mathcal{F}f)(\xi)=\lim_{k\rightarrow\infty}%
%TCIMACRO{\dint \limits_{|x|_{p}\leq p^{k}}}%
%BeginExpansion
{\displaystyle\int\limits_{|x|_{p}\leq p^{k}}}
%EndExpansion
\chi_{p}(\xi\cdot x)f(x)d^{n}x,\quad\text{for }\xi\in%
%TCIMACRO{\U{211a} }%
%BeginExpansion
\mathbb{Q}
%EndExpansion
_{p}\text{,}%
\]
where the limit is taken in $L^{2}$. We recall that the Fourier transform is
unitary on $L^{2},$ i.e. $||f||_{L^{2}}=||\mathcal{F}f||_{L^{2}}$ for $f\in
L^{2}$ and that (\ref{FF(f)}) is also valid in $L^{2}$, see e.g. \cite[Chapter
III, Section 2]{Taibleson}.

\section{\label{Section_3}$p$-Adic models of Eigen-Schuster type}

In this section we review and extend the $p$-adic version of Eigen-Schuster
model introduced in \cite{Zuniga-JPhA}, see e.g. \cite{Eigen1971}, \cite{Eigen
et al}, \cite{Nowak}, \cite{SchusterPeter}, \cite{Szat-PTRSB},
\cite{Tannembaum et al} for the classical model. This model describes
mutation-selection process of replicating sequences, when the sequences are
represented by $p$-adic numbers.

\subsection{The Model}

\subsubsection{The space of sequences}

A replicator is a model of an entity with the template property, which means
that it serves as a pattern for the generation of another replicator. This
copying process is subject to errors (mutations). Along this article we use
replicators, genomes and sequences as synonyms. The assumption of the
existence of replicators implies that the information stored in the
replicators is modified randomly, and that part of it is fixed due to the
selection pressure, which in turn is related with the self-replicate capacity
of the replicators (their fitness).

In our model each sequence corresponds to a $p$-adic number:%
\[
x=x_{-m}p^{-m}+x_{-m+1}p^{-m+1}+\ldots+x_{0}+x_{1}p+\ldots
\]
where the digits $x_{i}$s run through the set $\left\{  0,1,\ldots
,p-1\right\}  $. Consequently, in our model the sequences are words of
arbitrary length written in the alphabet $0$, $1$,$\ldots$, $p-1$, and the
space of sequences is $\left(  \mathbb{Q}_{p},\left\vert \cdot\right\vert
_{p}\right)  $, which is an infinite ultrametric space.

\subsubsection{Concentrations}

The concentration $X\left(  x,t\right)  $ of sequence $x\in\mathbb{Q}_{p}$ at
the time $t\geq0$ is a \ real number between zero and one. In addition, we
assume that%
\begin{equation}%
%TCIMACRO{\dint \limits_{\mathbb{Q}_{p}}}%
%BeginExpansion
{\displaystyle\int\limits_{\mathbb{Q}_{p}}}
%EndExpansion
X\left(  y,t\right)  dy=1\text{ for }t>0\text{.} \label{hypotesis_1}%
\end{equation}
This last condition assures that the total concentration remains constant for
$t>0$.

\subsubsection{The mutation measure}

We fix a transition density function $Q\left(  x,y,t\right)  \geq0$ for $x$,
$y\in\mathbb{Q}_{p}$, $t>0$, which means that given a Borel subset
$E\subseteq\mathbb{Q}_{p}$,%
\[%
%TCIMACRO{\dint \limits_{E}}%
%BeginExpansion
{\displaystyle\int\limits_{E}}
%EndExpansion
Q\left(  x,y,t\right)  dy
\]
represents the probability that the sequence $x$ will mutate into a sequence
belonging to set $E$ at time $t$. We assume that%
\[%
%TCIMACRO{\dint \limits_{\mathbb{Q}_{p}}}%
%BeginExpansion
{\displaystyle\int\limits_{\mathbb{Q}_{p}}}
%EndExpansion
Q\left(  x,y,t\right)  dy=1\text{ for any }x\in\mathbb{Q}_{p}\text{ and }t>0.
\]

We call $Q\left(  x,y,t\right)  dy$ \textit{a mutation measure}. We set
$\mathbb{R}_{+}=\left\{  y\in\mathbb{R};y\geq0\right\}  $. A simple\ way of
constructing time independent mutation measures is as follows. Take
$Q:\mathbb{R}_{+}\mathbb{\rightarrow R}_{+}$, and set $Q(\left\vert
x\right\vert _{p})$ such that
\[
\text{ }%
%TCIMACRO{\dint \limits_{\mathbb{Q}_{p}}}%
%BeginExpansion
{\displaystyle\int\limits_{\mathbb{Q}_{p}}}
%EndExpansion
Q\left(  \left\vert y\right\vert _{p}\right)  dy=1.
\]
Then, for a Borel set $E\subseteq\mathbb{Q}_{p}$ and $x\in\mathbb{Q}_{p}$, the
integral
\[%
%TCIMACRO{\dint \limits_{E}}%
%BeginExpansion
{\displaystyle\int\limits_{E}}
%EndExpansion
Q\left(  \left\vert x-y\right\vert _{p}\right)  dy
\]
gives the probability that sequence $x$ will mutate into a sequence belonging
to $E$.

\subsubsection{The\textit{ }fitness function}

The\textit{ fitness function} $f$ $\left(  x,t\right)  $, $x\in\mathbb{Q}_{p}%
$, $t>0$, is a non-negative bounded function. The simplest choice for $f$ is
test function independent of the time. This case was considered in
\cite{Zuniga-JPhA}. The assumption that function $f$ \ has compact support
means that the evolution process is limited to a certain region of the space
of sequences, which is infinite.

\subsubsection{The non-Archimedean replicator equation}

For $t>0$ fixed, and $x\in\mathbb{Q}_{p}$, we set%
\[
\left(  \boldsymbol{W}\varphi\right)  \left(  x\right)  =%
%TCIMACRO{\dint \limits_{\mathbb{Q}_{p}}}%
%BeginExpansion
{\displaystyle\int\limits_{\mathbb{Q}_{p}}}
%EndExpansion
Q\left(  x,y,t\right)  \left\{  f\left(  y,t\right)  \varphi\left(  y\right)
\right\}  dy.
\]
Under the hypotheses:%
\[
f\left(  x,t\right)  \leq C\text{ for any }x\in\mathbb{Q}_{p},
\]
where $C$ is a positive constant, and
\[
Q\left(  x,\cdot,t\right)  \in L^{2},
\]
we have $\boldsymbol{W}:L^{2}(\mathbb{Q}_{p})\rightarrow L^{2}(\mathbb{Q}%
_{p})$ is a well-defined continuous operator.

Our non-Archimedean Eigen-Schuster models have the form:%
\begin{equation}
\frac{\partial X\left(  x,t\right)  }{\partial t}=\boldsymbol{W}X\left(
x,t\right)  -\Phi\left(  t\right)  X\left(  x,t\right)  \text{, }%
x\in\mathbb{Q}_{p}\text{, }t\in\mathbb{R}_{+}\text{,} \label{eq1}%
\end{equation}
where%
\begin{equation}
\Phi\left(  t,X\right)  :=\Phi\left(  t\right)  =%
%TCIMACRO{\dint \limits_{\mathbb{Q}_{p}}}%
%BeginExpansion
{\displaystyle\int\limits_{\mathbb{Q}_{p}}}
%EndExpansion
f\left(  y,t\right)  X\left(  y,t\right)  dy\text{\ \ for }t\geq0\text{.}
\label{eq1A}%
\end{equation}
This function $\Phi\left(  t\right)  $\ is used to maintain constant the total
concentration in the chemostat.

\subsection{Discretization}

We fix $M\in\mathbb{N\smallsetminus}\left\{  0\right\}  $ and set
\[
G_{M}:=p^{-M}\mathbb{Z}_{p}/p^{M}\mathbb{Z}_{p}.
\]
We consider $G_{M}$\ as an additive group and fix the following systems of
representatives:%
\begin{equation}
I=I_{-M}p^{-M}+I_{-M+1}p^{-M+1}+\cdots+I_{0}+\cdots+I_{M-1}p^{M-1},
\label{seq_I}%
\end{equation}
where the $I_{j}$s belong to $\left\{  0,1,\ldots,p-1\right\}  $. Furthermore,
the restriction of $\left\vert \cdot\right\vert _{p}$ to $G_{M}$ induces an
absolute value such that $\left\vert G_{M}\right\vert _{p}=\left\{
0,p^{-\left(  M+1\right)  },\cdots,p^{-1},1,\cdots,p^{M}\right\}  $. We endow
$G_{M}$ with the metric induced by $\left\vert \cdot\right\vert _{p}$, and
thus $G_{M}$ becomes a finite ultrametric space. In addition, $G_{M}$ can be
identified with the set of branches (vertices at the top level) of a rooted
tree with $2M+1$ levels and $p^{2M}$ branches.

We denote by $\mathcal{D}_{M}$ the $\mathbb{R}$-vector subspace of
$\mathcal{D}_{\mathbb{R}}$ spanned by the functions%
\[
\Omega\left(  p^{M}\left\vert x-I\right\vert _{p}\right)  ,\text{ }I\in
G_{M}\text{.}%
\]
Notice that $\Omega\left(  p^{M}\left\vert x-I\right\vert \right)
\Omega\left(  p^{M}\left\vert x-J\right\vert \right)  =0$ for any $x$, if
$I\neq J$. Thus, any function $\varphi\in\mathcal{D}_{M}$ has the form%
\[
\varphi\left(  x\right)  =\sum_{I\in G_{M}}\varphi\left(  I\right)
\Omega\left(  p^{M}\left\vert x-I\right\vert _{p}\right)  ,
\]
where the $\varphi\left(  I\right)  $s are real numbers. The dimension of
$\mathcal{D}_{M}\left(  \mathbb{Q}_{p}\right)  $ is $\#G_{M}=p^{2M}$.

In order to explain the connection between the non-Archimedean replicator
equation (\ref{eq1}) and the classical one, we assume that $f\left(
\cdot,t\right)  $ and $X(\cdot,t)$ belong to $\mathcal{D}_{M}$, and that
$Q\left(  \cdot,\cdot,t\right)  $ belongs to $\mathcal{D}_{M}\times
\mathcal{D}_{M}$ for any $t$. This assumption means that the mentioned
functions can be very well approximated by functions in $\mathcal{D}_{M}$,
respectively in $\mathcal{D}_{M}\times\mathcal{D}_{M}$, see e.g.
\cite{zuniga-Nonlieal}. Then%
\begin{gather*}
Q\left(  x,y,t\right)  =\\
\frac{1}{C_{M}}\sum_{J\in G_{M}}\sum_{I\in G_{M}}Q\left(  I,J,t\right)
\Omega\left(  p^{M}\left\vert x-I\right\vert _{p}\right)  \Omega\left(
p^{M}\left\vert y-J\right\vert _{p}\right) \\
=\frac{1}{C_{M}}\sum_{I\in G_{M}}\left\{  \sum_{J\in G_{M}}Q\left(
I,J,t\right)  \Omega\left(  p^{M}\left\vert y-J\right\vert _{p}\right)
\right\}  \Omega\left(  p^{M}\left\vert x-I\right\vert _{p}\right)  ,
\end{gather*}
where $C_{M}=p^{-M}\sum_{J\in G_{M}}Q\left(  I,J,t\right)  $, and the
$Q\left(  I,J,t\right)  $s are real-valued functions of class $C^{1}$ in $t$,
\[
f\left(  x,t\right)  =\sum_{J\in G_{M}}f\left(  J,t\right)  \Omega\left(
p^{M}\left\vert x-J\right\vert _{p}\right)  ,
\]
and%
\[
X(x,t)=\sum_{J\in G_{M}}X(J,t)\Omega\left(  p^{M}\left\vert x-J\right\vert
_{p}\right)  \text{ for any }t\geq0\text{,}%
\]
where each $X(I,t)$ is a real-valued function of class $C^{1}$ in $t$. Now%
\begin{multline*}%
%TCIMACRO{\dint \limits_{\mathbb{Q}_{p}}}%
%BeginExpansion
{\displaystyle\int\limits_{\mathbb{Q}_{p}}}
%EndExpansion
Q\left(  x,y,t\right)  f\left(  y,t\right)  X\left(  y,t\right)  dy=\\
\frac{1}{C_{M}}\sum_{I\in G_{M}}\left\{  \sum_{J\in G_{M}}Q\left(
I,J,t\right)  f\left(  J,t\right)  X(J,t)%
%TCIMACRO{\dint \limits_{\mathbb{Q}_{p}}}%
%BeginExpansion
{\displaystyle\int\limits_{\mathbb{Q}_{p}}}
%EndExpansion
\Omega\left(  p^{M}\left\vert y-J\right\vert _{p}\right)  dy\right\}
\Omega\left(  p^{M}\left\vert x-I\right\vert _{p}\right) \\
=\frac{1}{C}\sum_{I\in G_{M}}\left\{  \sum_{J\in G_{M}}Q\left(  I,J,t\right)
f\left(  J,t\right)  X(J,t)\right\}  \Omega\left(  p^{M}\left\vert
x-I\right\vert _{p}\right)  ,
\end{multline*}
where $C=p^{-M}\sum_{J\in G_{M}}Q\left(  I,J,t\right)  $. Finally, using the
fact that the $\Omega\left(  p^{M}\left\vert x-I\right\vert _{p}\right)  $,
$I\in G_{M}$ are $\mathbb{R}$-linearly independent, we get%
\begin{equation}
\frac{d}{dt}X\left(  I,t\right)  =\frac{1}{C}\sum_{J\in G_{M}}Q\left(
I,J,t\right)  f\left(  J,t\right)  X(J,t)-\Phi_{M}\left(  t\right)  X\left(
J,t\right)  ,\text{ \ } \label{eq_Eigen}%
\end{equation}
for $I\in G_{M}$, where
\begin{equation}
\Phi_{M}\left(  t\right)  =p^{-M}\sum_{I\in G_{M}}f\left(  I,t\right)  X(I,t).
\label{eq_Eigen_1}%
\end{equation}
In the case in which the $Q\left(  I,J,t\right)  $s are independent of the
time, (\ref{eq_Eigen})-(\ref{eq_Eigen_1}) is the classical Eigen-Schuster
model on the finite ultrametric space $G_{M}$. In \cite{Zuniga-JPhA}, see also
\cite{zuniga-Nonlieal}, we argue that the system (\ref{eq1})-(\ref{eq1A}) is
the limit when $M$ tends to infinity of the system (\ref{eq_Eigen}%
)-(\ref{eq_Eigen_1}).

\subsection{\label{Sub_Section_MS}The error threshold and the topology of the
space of sequences}

In this section we review the Maynard Smith approach to the error threshold
problem assuming that the space of sequence is an arbitrary measurable metric
space $(\mathbb{Y},\mu)$, see \cite{Maynad Smith}, \cite{Szat-PTRSB} for the
classical version. This means that we do not assume a specific topology for
the space of sequences, in particular, the length of the sequences is
arbitrary. We divide the space of sequences into two disjoint sets:
\begin{equation}
\mathbb{Y}=\mathbb{A}\bigsqcup\mathbb{B}, \label{condition_0}%
\end{equation}
and assume that
\begin{equation}
f\mid_{\mathbb{A}}\equiv a\text{, \ \ }f\mid_{\mathbb{B}}\equiv b\text{,
\ with }a>b\text{,} \label{condition_1}%
\end{equation}
here \textquotedblleft$\equiv$\textquotedblright\ means identically equal. We
denote by $X(x,t)$ the concentration of sequences of type $\mathbb{A}$ and by
$Y(x,t)$ the concentration of sequences of type $\mathbb{B}$. Notice that the
supports of $X(x,t)$\ and $Y(x,t)$\ are disjoint. We denote by $q$ the
probability that a sequence in $\mathbb{A}$ mutates into a sequence belonging
to $\mathbb{B}$, and by $r$ the probability of mutation of a sequence from
$\mathbb{B}$ into a sequence in $\mathbb{A}$. The system of equations
governing the development of these populations is
\begin{equation}%
\begin{array}
[c]{cc}%
\frac{\partial X(x,t)}{\partial t}= & a\left(  1-q\right)
X(x,t)+brY(x,t)-\Phi\left(  t\right)  X\left(  x,t\right) \\
& \\
\frac{\partial Y(x,t)}{\partial t}= & aqX(x,t)+b\left(  1-r\right)
Y(x,t)-\Phi\left(  t\right)  Y\left(  x,t\right)  ,
\end{array}
\label{system1}%
\end{equation}
where
\[
\int\limits_{\mathbb{A}}X\left(  x,t\right)  d\mu\left(  x\right)
+\int\limits_{\mathbb{B}}Y\left(  x,t\right)  d\mu\left(  x\right)  =1,
\]
and
\begin{align*}
\Phi\left(  t\right)   &  =\int\limits_{\mathbb{A}}f\left(  x\right)  X\left(
x,t\right)  d\mu\left(  x\right)  +\int\limits_{\mathbb{B}}f\left(  x\right)
Y\left(  x,t\right)  d\mu\left(  x\right) \\
&  =a\int\limits_{\mathbb{A}}X\left(  x,t\right)  d\mu\left(  x\right)
+b\int\limits_{\mathbb{B}}Y\left(  x,t\right)  d\mu\left(  x\right)  .
\end{align*}
We assume that \ $r$ is very small, so we can assume that system
(\ref{system1}) has the form%
\begin{equation}%
\begin{array}
[c]{cc}%
\frac{\partial X(x,t)}{\partial t}= & a\left(  1-q\right)  X(x,t)-\Phi\left(
t\right)  X\left(  x,t\right) \\
& \\
\frac{\partial Y(x,t)}{\partial t}= & aqX(x,t)+bY(x,t)-\Phi\left(  t\right)
Y\left(  x,t\right)  .
\end{array}
\label{system2}%
\end{equation}
By taking $Z(x,t)=\frac{X(x,t)}{Y(x,t)}$, system (\ref{system2}) becomes%
\[
\frac{\partial Z(x,t)}{\partial t}=Z(x,t)\left\{  a\left(  1-q\right)
-aqZ(x,t)-b\right\}  .
\]
Assuming that concentration $Z(x,t)$ achieves a steady concentration
$\overline{Z}(x)$ over the time, we get%
\[
\overline{Z}(x)=\frac{a\left(  1-q\right)  -b}{aq}.
\]
The original population persists, i.e. the sequences in $\mathbb{A}$ survive
in a long term, if and only if $\overline{Z}(x)>0$, i.e. if and only if
\[
1-q>\frac{b}{a}.
\]
By writing $\frac{b}{a}=1-s$, with $s\in\left(  0,1\right)  $, \ the error
threshold \ is given by%
\begin{equation}
q<s. \label{error_thr}%
\end{equation}
This is exactly the classical condition determining the error threshold, see
e.g. \cite{Maynad Smith}, \cite{Szat-PTRSB}. Then, the error threshold
phenomenon occurs independently of the topology of the space of sequences.

\subsubsection{An example}

We take $(\mathbb{Y},\mu)=\left(  \mathbb{Z}_{p},dx\right)  $\ and consider
the mutation measures supported in the unit ball of the form
\begin{equation}
Q_{0}\left(  \left\vert x\right\vert _{p};\sigma,\alpha\right)  =Q_{0}\left(
\left\vert x\right\vert _{p}\right)  =\mathcal{N}\Omega\left(  \left\vert
x\right\vert _{p}\right)  \exp(-\sigma\left\vert x\right\vert _{p}^{\alpha}),
\label{EC_1}%
\end{equation}
where for $\sigma,\alpha>0$, and
\begin{equation}
\mathcal{N}\left(  \sigma,\alpha\right)  =%
%TCIMACRO{\dint \limits_{\mathbb{Z}_{p}}}%
%BeginExpansion
{\displaystyle\int\limits_{\mathbb{Z}_{p}}}
%EndExpansion
\exp(-\sigma\left\vert x\right\vert _{p}^{\alpha})\text{ }dx. \label{EC_2}%
\end{equation}

Then, $Q_{0}\left(  \left\vert x\right\vert _{p}\right)  dx$ gives rise to a
family of mutation measures. We now fix a sequence $0\in\mathbb{Z}_{p}$, which
plays the role of the master sequence, and divide the space of sequences
$\mathbb{Z}_{p}$ into two subsets: $p^{M}\mathbb{Z}_{p}$ and $\mathbb{Z}%
_{p}\smallsetminus p^{M}\mathbb{Z}_{p}$ for some positive integer $M$. The set
$p^{M}\mathbb{Z}_{p}$ consists of the sequences in the unit ball that coincide
with the sequence $0$ up to the digit $M-1$. We also assume that%
\[
f\mid_{p^{M}\mathbb{Z}_{p}}\equiv a\text{, }f\mid_{\mathbb{Z}_{p}%
\smallsetminus p^{M}\mathbb{Z}_{p}}\equiv b\text{, with }a>b\text{.}%
\]
The probability $q\left(  \sigma,\alpha\right)  $ that a sequence in the set
$p^{M}\mathbb{Z}_{p}$ mutates into a sequence belonging to the set
$\mathbb{Z}_{p}\smallsetminus p^{M}\mathbb{Z}_{p}$ satisfies
\begin{gather*}
q\left(  \sigma,\alpha\right)  =%
%TCIMACRO{\dint \limits_{p^{M}\mathbb{Z}_{p}}}%
%BeginExpansion
{\displaystyle\int\limits_{p^{M}\mathbb{Z}_{p}}}
%EndExpansion
\text{ }%
%TCIMACRO{\dint \limits_{\mathbb{Z}_{p}\smallsetminus p^{M}\mathbb{Z}_{p}}}%
%BeginExpansion
{\displaystyle\int\limits_{\mathbb{Z}_{p}\smallsetminus p^{M}\mathbb{Z}_{p}}}
%EndExpansion
Q_{0}\left(  \left\vert x-y\right\vert _{p}\right)  dydx=%
%TCIMACRO{\dint \limits_{p^{M}\mathbb{Z}_{p}}}%
%BeginExpansion
{\displaystyle\int\limits_{p^{M}\mathbb{Z}_{p}}}
%EndExpansion
\text{ }%
%TCIMACRO{\dint \limits_{\mathbb{Z}_{p}\smallsetminus p^{M}\mathbb{Z}_{p}}}%
%BeginExpansion
{\displaystyle\int\limits_{\mathbb{Z}_{p}\smallsetminus p^{M}\mathbb{Z}_{p}}}
%EndExpansion
Q_{0}\left(  \left\vert y\right\vert _{p}\right)  dydx\\
=p^{-M}\text{ }%
%TCIMACRO{\dint \limits_{\mathbb{Z}_{p}\smallsetminus p^{M}\mathbb{Z}_{p}}}%
%BeginExpansion
{\displaystyle\int\limits_{\mathbb{Z}_{p}\smallsetminus p^{M}\mathbb{Z}_{p}}}
%EndExpansion
Q_{0}\left(  \left\vert y\right\vert _{p}\right)  dy>p^{-M}\text{ }%
%TCIMACRO{\dint \limits_{\left\vert y\right\vert _{p}=p^{-M+1}}}%
%BeginExpansion
{\displaystyle\int\limits_{\left\vert y\right\vert _{p}=p^{-M+1}}}
%EndExpansion
Q_{0}\left(  \left\vert y\right\vert _{p}\right)  dy\\
=\left(  1-p^{-1}\right)  p^{-2M+1}\mathcal{N}\exp(-\sigma p^{\left(
-M+1\right)  \alpha})\geq\left(  p-1\right)  p^{-2M}\mathcal{N}\exp\left(
-\sigma\right)  ,
\end{gather*}
where we used that $\exp(-\sigma p^{\left(  -M+1\right)  \alpha})\geq
\exp\left(  -\sigma\right)  $ for $M\geq1$.

We analyze now wether or not the condition (\ref{error_thr}) is satisfied,
when $M$ is fixed. The condition $M$ fixed can be relaxed to `$M$ is upper
bounded.' Taking into account that $s>0$ can be arbitrarily close to zero,
then there exists $M_{c}\geq M$ such that
\[
q\left(  \alpha\right)  >\left(  p-1\right)  p^{-2M_{c}}\mathcal{N}\exp\left(
-\sigma\right)  \geq s,
\]
which implies the existence of a classical error threshold:%
\begin{equation}
M_{c}\leq-\frac{\ln s}{2\ln p}-\frac{\sigma}{2\ln p}+\frac{\ln\left(
p-1\right)  \mathcal{N}}{2\ln p}\text{ for }s\in\left(  0,1\right)
\text{.}\label{EC_5}%
\end{equation}

If $M$ can grow, the condition (\ref{error_thr}) is satisfied if $\ \left(
p-1\right)  p^{-2M_{c}}\mathcal{N}\exp\left(  -\sigma\right)  <q\left(
\alpha\right)  <s$, which implies that%
\[
M>-\frac{\ln s}{2\ln p}-\frac{\sigma}{2\ln p}+\frac{\ln\left(  p-1\right)
\mathcal{N}}{2\ln p}\text{ for }s\in\left(  0,1\right)  \text{.}%
\]
Under a `fierce competition'\ between the groups $p^{M}\mathbb{Z}_{p}$,
$\mathbb{Z}_{p}\smallsetminus p^{M}\mathbb{Z}_{p}$, i.e. when rate $b$
approaches from the left to rate $a$ (i.e. $s\rightarrow0^{+}$), $M$ must
grow, which means that the survival of the sequences in the group
$p^{M}\mathbb{Z}_{p}$ demands that they get closer to master sequence $0$,
which means, that they must increase their lengths. Then, in this model \ the
`classical Eigen's paradox does not occur' because the length of the genomes
can grow during the evolution process.

\section{\label{Section_4}The $p$-adic Eigen-Model in the unit ball with a
radial mutation measure}

In this section we show the existence of a solution for the Cauchy problem
associated with (\ref{eq1})-(\ref{eq1A}). This goal is achieved by using the
classical method of separation of variables and $p$-adic wavelets, several
preliminary results are required.

\subsection{$p$-adic wavelets and pseudo-differential operators}

We take $\mathbb{K}=\mathbb{C}$, $\mathbb{R}$. We denote by $C(\mathbb{Q}%
_{p},\mathbb{K})$ the $\mathbb{K}$-vector space of \ continuous $\mathbb{K}%
$-valued functions defined on $\mathbb{Q}_{p}$.

We fix a function $\mathfrak{a}:\mathbb{R}_{+}\rightarrow\mathbb{R}_{+}$ and
define the pseudo-differential operator%
\[%
\begin{array}
[c]{ccc}%
\mathcal{D} & \rightarrow & C(\mathbb{Q}_{p},\mathbb{C})\cap L^{2}\\
&  & \\
\varphi & \rightarrow & \boldsymbol{A}\varphi,
\end{array}
\]
where $\left(  \boldsymbol{A}\varphi\right)  \left(  x\right)  =\mathcal{F}%
_{\xi\rightarrow x}^{-1}\left\{  \mathfrak{a}\left(  \left\vert \xi\right\vert
_{p}\right)  \mathcal{F}_{x\rightarrow\xi}\varphi\right\}  $.

The set of functions $\left\{  \Psi_{rnj}\right\}  $ defined as%
\begin{equation}
\Psi_{rnj}\left(  x\right)  =p^{\frac{-r}{2}}\chi_{p}\left(  p^{-1}j\left(
p^{r}x-n\right)  \right)  \Omega\left(  \left\vert p^{r}x-n\right\vert
_{p}\right)  , \label{eq4}%
\end{equation}
where $r\in\mathbb{Z}$, $j\in\left\{  1,\cdots,p-1\right\}  $, and $n$ runs
through a fixed set of representatives of $\mathbb{Q}_{p}/\mathbb{Z}_{p}$, is
an orthonormal basis of $L^{2}(\mathbb{Q}_{p})$ consisting of eigenvectors of
operator $\boldsymbol{A}$:%
\begin{equation}
\boldsymbol{A}\Psi_{rnj}=\mathfrak{a}(p^{1-r})\Psi_{rnj}\text{ for any
}r\text{, }n\text{, }j\text{,} \label{eq5}%
\end{equation}
see e.g. \cite[Theorem 3.29]{KKZuniga}, \cite[Theorem 9.4.2]{Alberio et
al}.\ Notice that%
\[
\widehat{\Psi}_{rnj}\left(  \xi\right)  =p^{\frac{r}{2}}\chi_{p}\left(
p^{-r}n\xi\right)  \Omega\left(  \left\vert p^{-r}\xi+p^{-1}j\right\vert
_{p}\right)  ,
\]
and then%
\[
\mathfrak{a}\left(  \left\vert \xi\right\vert _{p}\right)  \widehat{\Psi
}_{rnj}\left(  \xi\right)  =\mathfrak{a}(p^{1-r})\widehat{\Psi}_{rnj}\left(
\xi\right)  .
\]

\begin{remark}
From now on, we take $Q_{0}\left(  \left\vert x\right\vert _{p}\right)  $ to
be a real-valued, non-negative, radial function supported in $\mathbb{Z}_{p}$
satisfying $Q_{0}\in L^{1}\left(  \mathbb{Z}_{p}\right)  \cap L^{2}\left(
\mathbb{Z}_{p}\right)  $, and $\left\Vert Q_{0}\right\Vert _{1}=1$. By
extending $Q_{0}$ as zero out of $\mathbb{Z}_{p}$, we assume that $Q_{0}\in
L^{1}\left(  \mathbb{Q}_{p}\right)  \cap L^{2}\left(  \mathbb{Q}_{p}\right)
$. The Fourier transform $\widehat{Q_{0}}$ of $Q_{0}$ is a real-valued,
continuous function, which is radial in $\mathbb{Q}_{p}\smallsetminus\left\{
0\right\}  $, \ satisfying $\widehat{Q_{0}}\left(  0\right)  =1$, for this
reason, we use the notation $\widehat{Q}_{0}\left(  \left\vert \xi\right\vert
_{p}\right)  $.
\end{remark}

We now define
\[
\boldsymbol{B}\varphi=Q_{0}\ast\varphi\text{ for }\varphi\in\mathcal{D}\left(
\mathbb{Z}_{p}\right)  .
\]
Notice that the support of $Q_{0}\ast\varphi$ is $\mathbb{Z}_{p}$ since it is
an additive group. Since $\boldsymbol{B}\varphi\left(  x\right)
=\mathcal{F}_{\xi\rightarrow x}^{-1}(\widehat{Q_{0}}\left(  \left\vert
\xi\right\vert _{p}\right)  \mathcal{F}_{x\rightarrow\xi}\varphi)$, we have
\begin{equation}
\boldsymbol{B}\Psi_{rnj}\left(  x\right)  =\widehat{Q_{0}}\left(
p^{1-r}\right)  \Psi_{rnj}\left(  x\right)  , \label{nota_op_W}%
\end{equation}
where $\widehat{Q_{0}}\left(  p^{1-r}\right)  $ is a real number satisfying
$\left\vert \widehat{Q}_{0}\left(  p^{1-r}\right)  \right\vert \leq1$, and
$\Psi_{rnj}\left(  x\right)  $ is supported in the unit ball.

\subsection{$p$-adic wavelets supported in balls}

Notice that the restriction of $\Psi_{rnj}\left(  x\right)  $ to the ball
$I+p^{R_{0}}\mathbb{Z}_{p}$ has the form%
\begin{gather}
\Omega\left(  p^{R_{0}}\left\vert x-I\right\vert _{p}\right)  \Psi
_{rnj}\left(  x\right)  =\label{Table}\\
\left\{
\begin{array}
[c]{ll}%
\Psi_{rnj}\left(  x\right)  & \text{if }np^{-r}-I\in p^{R_{0}}\mathbb{Z}%
_{p}\text{, }r\leq-R_{0}\\
& \\
p^{\frac{-r}{2}}\Omega\left(  p^{R_{0}}\left\vert x-I\right\vert _{p}\right)
& \text{if }np^{-r}-I\in p^{-r}\mathbb{Z}_{p}\text{, }r\geq-R_{0}+1\\
& \\
0 & \text{if }np^{-r}-I\notin p^{-r}\mathbb{Z}_{p}\text{, }r\geq-R_{0}+1.
\end{array}
\right. \nonumber
\end{gather}

\begin{remark}
\label{Nota_Wavelets}With the above notation,%
\begin{gather*}
\left\{  \Psi_{rnj}\left(  x\right)  ;\text{supp }\Psi_{rnj}\left(  x\right)
\subseteq I+p^{R_{0}}\mathbb{Z}_{p}\right\}  =\\
\left\{  \Psi_{rnj}\left(  x+I\right)  ;\text{supp }\Psi_{rnj}\left(
x\right)  \subseteq p^{R_{0}}\mathbb{Z}_{p}\right\}  .
\end{gather*}
This observation is a very particular case of a general result asserting that
the basis (\ref{eq4}) is the orbit under a group of matrices of a mother
wavelet, see \cite[Theorem 9]{Albeverio-Kozyrev}.
\end{remark}

\begin{proposition}
The set of functions%
\begin{equation}
\left\{  \Omega\left(  p^{R_{0}}\left\vert x\right\vert _{p}\right)  \right\}
\bigcup\bigcup\limits_{j\in\left\{  1,\ldots,p-1\right\}  }\text{ }%
\bigcup\limits_{r\leq-R_{0}}\text{ }\bigcup\limits_{\substack{np^{-r}\in
p^{R_{0}}\mathbb{Z}_{p}\\n\in\mathbb{Q}_{p}/\mathbb{Z}_{p}}}\left\{
\Psi_{rnj}\left(  x\right)  \right\}  \label{Basis}%
\end{equation}
is an orthonormal basis of $L^{2}\left(  p^{R_{0}}\mathbb{Z}_{p}\right)  $.
\end{proposition}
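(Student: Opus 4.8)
The plan is to deduce this from the completeness of the Kozyrev basis $(\ref{eq4})$ of $L^{2}(\mathbb{Q}_{p})$ by projecting that basis orthogonally onto the closed subspace $L^{2}(p^{R_{0}}\mathbb{Z}_{p})\subset L^{2}(\mathbb{Q}_{p})$ consisting of the functions supported in the ball $p^{R_{0}}\mathbb{Z}_{p}$; the orthogonal projection $P$ onto this subspace is simply multiplication by the characteristic function of $p^{R_{0}}\mathbb{Z}_{p}$.

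\emph{Orthonormality.} The functions $\Psi_{rnj}$ occurring in $(\ref{Basis})$ form a subfamily of the orthonormal basis $(\ref{eq4})$, hence are mutually orthonormal; and the condition imposed on them ($r\le-R_{0}$ together with $np^{-r}\in p^{R_{0}}\mathbb{Z}_{p}$) is, by the first line of $(\ref{Table})$ with $I=0$, exactly the condition $\operatorname{supp}\Psi_{rnj}\subseteq p^{R_{0}}\mathbb{Z}_{p}$, so each such $\Psi_{rnj}$ already belongs to $L^{2}(p^{R_{0}}\mathbb{Z}_{p})$. The remaining generator is (a normalization of) the characteristic function of $p^{R_{0}}\mathbb{Z}_{p}$; since the support of each $\Psi_{rnj}$ in the list is contained in $p^{R_{0}}\mathbb{Z}_{p}$, its inner product with this generator equals $\int_{p^{R_{0}}\mathbb{Z}_{p}}\Psi_{rnj}(x)\,dx=\int_{\mathbb{Q}_{p}}\Psi_{rnj}(x)\,dx=0$ by $(\ref{average_zer})$. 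Thus $(\ref{Basis})$ is an orthonormal system in $L^{2}(p^{R_{0}}\mathbb{Z}_{p})$.

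\emph{Completeness.} I would invoke the elementary fact that if $P$ is the orthogonal projection of a Hilbert space onto a closed subspace $V$ and $\{e_{k}\}$ is an orthonormal basis of the whole space, then $\{Pe_{k}\}$ spans a dense subspace of $V$ (indeed, if $v\in V$ is orthogonal to every $Pe_{k}$, then $\langle v,e_{k}\rangle=\langle Pv,e_{k}\rangle=\langle v,Pe_{k}\rangle=0$ for all $k$, so $v=0$). Apply this with $V=L^{2}(p^{R_{0}}\mathbb{Z}_{p})$ and $\{e_{k}\}=(\ref{eq4})$. The table $(\ref{Table})$ (with $I=0$) computes $P\Psi_{rnj}$ in all cases: it equals $\Psi_{rnj}$ when $\operatorname{supp}\Psi_{rnj}\subseteq p^{R_{0}}\mathbb{Z}_{p}$, it equals a scalar multiple of the characteristic function of $p^{R_{0}}\mathbb{Z}_{p}$ when $r\ge-R_{0}+1$ and $np^{-r}\in p^{-r}\mathbb{Z}_{p}$, and it is $0$ otherwise. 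Hence the linear span of $\{P\Psi_{rnj}\}$ coincides with the linear span of the family $(\ref{Basis})$, and by the displayed fact this span is dense in $L^{2}(p^{R_{0}}\mathbb{Z}_{p})$. Together with the orthonormality established above, this shows that $(\ref{Basis})$ is an orthonormal basis of $L^{2}(p^{R_{0}}\mathbb{Z}_{p})$.

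The only point requiring care is purely bookkeeping: one must check that, as $(r,n,j)$ range over all admissible triples, the three cases of $(\ref{Table})$ reproduce precisely the functions listed in $(\ref{Basis})$ (up to the harmless scalar factors), and in particular that the index set ``$r\le-R_{0}$, $np^{-r}\in p^{R_{0}}\mathbb{Z}_{p}$, $n\in\mathbb{Q}_{p}/\mathbb{Z}_{p}$'' in $(\ref{Basis})$ captures every wavelet supported in $p^{R_{0}}\mathbb{Z}_{p}$ and no other. A reader preferring a self-contained argument can instead run a multiresolution/dimension count: for $N\ge R_{0}$ the space $V^{(N)}$ spanned by the indicators of the sub-balls of $p^{R_{0}}\mathbb{Z}_{p}$ of radius $p^{-N}$ has dimension $p^{N-R_{0}}$; the members of $(\ref{Basis})$ lying in $V^{(N)}$ are the scaling function together with those $\Psi_{rnj}$ with $1-N\le r\le-R_{0}$, and these number $1+\sum_{r=1-N}^{-R_{0}}(p-1)p^{-R_{0}-r}=p^{N-R_{0}}$, so they form an orthonormal basis of $V^{(N)}$; letting $N\to\infty$ and using that $\bigcup_{N}V^{(N)}$ is dense in $L^{2}(p^{R_{0}}\mathbb{Z}_{p})$ completes the proof.
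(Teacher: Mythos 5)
Your proof is correct, but it takes a genuinely different route from the paper's. The paper first shows that the continuous characters of the compact group $\left(p^{R_{0}}\mathbb{Z}_{p},+\right)$ form an orthonormal basis of $L^{2}(p^{R_{0}}\mathbb{Z}_{p})$ (via Lemmas A and B, quoting Igusa, plus density of test functions), and then expands each nontrivial character $\Omega\left(p^{R_{0}}\left\vert x\right\vert_{p}\right)\chi\left(p^{-l}x\right)$ in the full wavelet basis, computing the Fourier coefficients explicitly by an oscillatory integral and using the Table to see that only the wavelets indexed by $r\leq-R_{0}$, $np^{-r}\in p^{R_{0}}\mathbb{Z}_{p}$ survive. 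You instead bypass characters entirely: you observe that multiplication by the indicator of the ball is the orthogonal projection $P$ onto $L^{2}(p^{R_{0}}\mathbb{Z}_{p})$, invoke the elementary fact that the projections $Pe_{k}$ of an orthonormal basis of the ambient space span a dense subspace of the range of $P$, and read off from the Table that the span of $\left\{P\Psi_{rnj}\right\}$ is exactly the span of the proposed family. This is shorter, avoids the explicit coefficient computation, and also makes the orthonormality part explicit (via the average-zero property of the wavelets), which the paper leaves tacit; what the paper's route buys is an explicit change-of-basis formula between characters and wavelets, which has independent interest but is not needed for the proposition. Your closing multiresolution dimension count is also correct and gives a fully self-contained alternative. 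One point worth flagging in either approach: as written, the generator $\Omega\left(\left\vert x\right\vert_{p}\right)$ in the statement should be the normalized indicator $p^{R_{0}/2}\Omega\left(p^{R_{0}}\left\vert x\right\vert_{p}\right)$ of the ball $p^{R_{0}}\mathbb{Z}_{p}$ (as the paper's own proof implicitly uses); you handle this correctly by calling it ``a normalization of the characteristic function of $p^{R_{0}}\mathbb{Z}_{p}$,'' but it deserves an explicit remark.
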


\begin{proof}
In the demonstration we use the following results:

\textbf{Lemma A} (see e.g. \cite[Lemma 2.3.3]{Alberio et al}). Consider the
compact additive group $\left(  p^{R_{0}}\mathbb{Z}_{p},+\right)  $. Then any
nontrivial continuous additive character $\chi:p^{R_{0}}\mathbb{Z}%
_{p}\rightarrow S$ has the form $\chi\left(  x\right)  =\chi_{p}\left(
p^{-l}x\right)  $ for some positive integer $l\geq R_{0}$.

\textbf{Lemma B} (see e.g. \cite[Proposition 7.2.2]{Igusa}) Consider the
pre-Hilbert space $\left(  \mathcal{D}(p^{R_{0}}\mathbb{Z}_{p}),\left\langle
\cdot,\cdot\right\rangle \right)  $, where $\left\langle \cdot,\cdot
\right\rangle $ denotes the standard inner product in $L^{2}(p^{R_{0}%
}\mathbb{Z}_{p})$. Let $\Gamma\left(  p^{R_{0}}\mathbb{Z}_{p}\right)  $ be the
group of continuous characters of \ $\left(  p^{R_{0}}\mathbb{Z}_{p},+\right)
$. Then $\Gamma\left(  p^{R_{0}}\mathbb{Z}_{p}\right)  $ forms an orthonormal
basis of $\mathcal{D}(p^{R_{0}}\mathbb{Z}_{p})$. More precisely, every
$\varphi\in\mathcal{D}(p^{R_{0}}\mathbb{Z}_{p})$ can be expressed as a finite
sum of the form%
\[
\varphi\left(  x\right)  =\sum\limits_{\chi\in\Gamma\left(  p^{R_{0}%
}\mathbb{Z}_{p}\right)  }c_{\chi}\chi\left(  x\right)  \text{, where }c_{\chi
}=\left\langle \varphi,\chi\right\rangle .
\]

We identify a character \ $\chi\in\Gamma\left(  p^{R_{0}}\mathbb{Z}%
_{p}\right)  $ with the function $\Omega\left(  p^{R_{0}}\left\vert
x\right\vert _{p}\right)  \chi\in L^{2}(p^{R_{0}}\mathbb{Z}_{p})$. We denote
by $Span\left(  \Gamma\left(  p^{R_{0}}\mathbb{Z}_{p}\right)  \right)  $ the
$\mathbb{C}$-vector space generated by the elements of $\Gamma\left(
p^{R_{0}}\mathbb{Z}_{p}\right)  $. \ We first show that $\Gamma\left(
p^{R_{0}}\mathbb{Z}_{p}\right)  $ is an orthonormal basis of $L^{2}(p^{R_{0}%
}\mathbb{Z}_{p})$, i.e. that
\begin{equation}
\overline{Span\left(  \Gamma\left(  p^{R_{0}}\mathbb{Z}_{p}\right)  \right)
}=L^{2}(p^{R_{0}}\mathbb{Z}_{p}), \label{assertion}%
\end{equation}
where the bar means the topological closure with respect to $\left\Vert
\cdot\right\Vert _{2}$. Given any $f\in L^{2}(p^{R_{0}}\mathbb{Z}_{p})$ and
any $\epsilon>0$, by using the fact that $\mathcal{D}(p^{R_{0}}\mathbb{Z}%
_{p})$ is dense in $L^{2}(p^{R_{0}}\mathbb{Z}_{p})$ see e.g. \cite[Proposition
4.3.3]{Alberio et al}, there is $\varphi\in\mathcal{D}(p^{R_{0}}\mathbb{Z}%
_{p})$ such that $\left\Vert f-\varphi\right\Vert _{2}<\epsilon$. Now by using
Lemma B, $\varphi\left(  x\right)  =\sum_{\chi\in\Gamma\left(  p^{R_{0}%
}\mathbb{Z}_{p}\right)  }c_{\chi}\chi\left(  x\right)  $,
\[
\left\Vert f-\sum_{\chi\in\Gamma\left(  p^{R_{0}}\mathbb{Z}_{p}\right)
}c_{\chi}\chi\left(  x\right)  \right\Vert _{2}<\epsilon\text{ with }%
\sum_{\chi\in\Gamma\left(  p^{R_{0}}\mathbb{Z}_{p}\right)  }c_{\chi}%
\chi\left(  x\right)  \in Span\left(  \Gamma\left(  p^{R_{0}}\mathbb{Z}%
_{p}\right)  \right)  \text{,}%
\]
which implies (\ref{assertion}).

Finally, to show that (\ref{Basis}) is an orthonormal basis of $L^{2}\left(
p^{R_{0}}\mathbb{Z}_{p}\right)  $, by Lemma A, it is sufficient to show that
each $\chi\in\Gamma\left(  p^{R_{0}}\mathbb{Z}_{p}\right)  $ can be
represented as a linear combination of elements of the set (\ref{Basis}).
Since the trivial character is exactly $\Omega\left(  p^{R_{0}}\left\vert
x\right\vert _{p}\right)  $, it is sufficient to show that a character
$\Omega\left(  p^{R_{0}}\left\vert x\right\vert _{p}\right)  \chi\left(
p^{-l}x\right)  $, $l\geq R_{0}+1$, $\ $is a linear combinations of wavelets
of the form $\Psi_{rnj}$, with $np^{-r}\in p^{R_{0}}\mathbb{Z}_{p}$,
$r\leq-R_{0}$. Since $\Omega\left(  p^{R_{0}}\left\vert x\right\vert
_{p}\right)  \chi\left(  p^{-l}x\right)  \in L^{2}(\mathbb{Q}_{p})$, we may
compute the Fourier series with respect to the orthonormal basis $\left\{
\Psi_{rnj}\right\}  _{rnj}$. By using Table \ref{Table} and the fact that
$\int_{p^{R_{0}}\mathbb{Z}_{p}}\chi\left(  p^{-l}x\right)  dx=0$, for $l\geq
R_{0}+1$, we conclude that the non-zero Fourier coefficients $C_{rnj}$ are
given by%
\begin{multline*}
C_{rnj}=\\
p^{\frac{-r}{2}}\int\limits_{p^{R_{0}}\mathbb{Z}_{p}}\overline{\chi\left(
p^{-l}x\right)  }\chi_{p}\left(  p^{-1}j\left(  p^{r}x-n\right)  \right)
\Omega\left(  \left\vert p^{r}x-n\right\vert _{p}\right)  dx,
\end{multline*}
for $np^{-r}\in p^{R_{0}}\mathbb{Z}_{p}$, $r\leq-R_{0}$, i.e.
\begin{align*}
C_{rnj}  &  =\\
&  p^{\frac{-r}{2}}\chi_{p}\left(  -p^{-1}jn\right)  \int\limits_{p^{-r}%
n+p^{-r}\mathbb{Z}_{p}}\chi_{p}\left(  x\left(  p^{r-1}j-p^{-l}\right)
\right)  dx\text{ }\\
&  =p^{\frac{r}{2}}\chi_{p}\left(  -p^{-r-l}n\right)  \int\limits_{\mathbb{Z}%
_{p}}\chi_{p}\left(  z\left(  p^{-1}j-p^{-r-l}\right)  \right)  dy\text{
\ (taking }x=p^{-r}n+p^{-r}z\text{)}\\
&  =p^{\frac{r}{2}}\chi_{p}\left(  -p^{-r-l}n\right)  \int\limits_{\mathbb{Z}%
_{p}}\chi_{p}\left(  z\left(  p^{-1}j\right)  \right)  dy\text{ if }-r\geq l\\
&  =0\text{ if }-r\geq l.
\end{align*}
Which implies that $\Omega\left(  p^{R_{0}}\left\vert x\right\vert
_{p}\right)  \chi\left(  p^{-l}x\right)  $, $l\geq R_{0}+1$, $\ $is a linear
combinations of wavelets of the form $\Psi_{rnj}$, with $np^{-r}\in p^{R_{0}%
}\mathbb{Z}_{p}$, $r\leq-R_{0}$.
\end{proof}

\begin{remark}
Let $I\in\mathbb{Q}_{p}\smallsetminus p^{R_{0}}\mathbb{Z}_{p}$. By using the
isometry%
\[%
\begin{array}
[c]{ccc}%
L^{2}(I+p^{R_{0}}\mathbb{Z}_{p}) & \rightarrow & L^{2}(p^{R_{0}}\mathbb{Z}%
_{p})\\
&  & \\
f(x) & \rightarrow & f(x+I),
\end{array}
\]
we have that any $f\in L^{2}(I+p^{R_{0}}\mathbb{Z}_{p})$ admits a Fourier
expansion of the form%
\[
f(x)=C_{0}\Omega\left(  p^{R_{0}}\left\vert x-I\right\vert _{p}\right)
+\sum\limits_{rnj}C_{rnj}\Psi_{rnj}\left(  x-I\right)  ,\text{ }%
\]
where $x\in I+p^{R_{0}}\mathbb{Z}_{p}$, $j\in\left\{  1,\ldots,p-1\right\}  $,
$n\in\mathbb{Q}_{p}/\mathbb{Z}_{p}$, $r\leq-R_{0}$, $np^{-r}\in p^{R_{0}%
}\mathbb{Z}_{p}$. By Remark \ref{Nota_Wavelets},%
\begin{multline*}
f(x)=C_{0}\Omega\left(  p^{R_{0}}\left\vert x-I\right\vert _{p}\right)  +\\
\sum\limits_{rnj\text{, supp}\Psi_{rnj}\left(  x\right)  \subseteq I+p^{R_{0}%
}\mathbb{Z}_{p}\text{ }}C_{rnj}\Psi_{rnj}\left(  x\right)  ,
\end{multline*}
for $x\in I+p^{R_{0}}\mathbb{Z}_{p}$. We now set
\[
L_{0}^{2}(I,R_{0})=\left\{  f\in L^{2}(I+p^{R_{0}}\mathbb{Z}_{p});%
%TCIMACRO{\dint \limits_{I+p^{R_{0}}\mathbb{Z}_{p}}}%
%BeginExpansion
{\displaystyle\int\limits_{I+p^{R_{0}}\mathbb{Z}_{p}}}
%EndExpansion
f\text{ }dx=0\right\}  .
\]
In conclusion, we have the following result:
\end{remark}

\begin{proposition}
\label{Theorem_1}The space $L^{2}(I+p^{R_{0}}\mathbb{Z}_{p})$ satisfies
\[
L^{2}(I+p^{R_{0}}\mathbb{Z}_{p})=\mathbb{C}\Omega\left(  p^{R_{0}}\left\vert
x-I\right\vert _{p}\right)
%TCIMACRO{\dbigoplus }%
%BeginExpansion
{\displaystyle\bigoplus}
%EndExpansion
L_{0}^{2}(I,R_{0})
\]
and
\begin{align}
L^{2}(I+p^{R_{0}}\mathbb{Z}_{p})  &  =%
%TCIMACRO{\dbigoplus \limits_{j\in\left\{  1,\ldots,p-1\right\}  }}%
%BeginExpansion
{\displaystyle\bigoplus\limits_{j\in\left\{  1,\ldots,p-1\right\}  }}
%EndExpansion
\text{ }%
%TCIMACRO{\dbigoplus \limits_{r\leq-R_{0}}}%
%BeginExpansion
{\displaystyle\bigoplus\limits_{r\leq-R_{0}}}
%EndExpansion
\text{ }%
%TCIMACRO{\dbigoplus \limits_{\substack{n\in\mathbb{Q}_{p}/\mathbb{Z}%
%_{p}\\np^{-r}\in p^{R_{0}}\mathbb{Z}_{p}}}}%
%BeginExpansion
{\displaystyle\bigoplus\limits_{\substack{n\in\mathbb{Q}_{p}/\mathbb{Z}%
_{p}\\np^{-r}\in p^{R_{0}}\mathbb{Z}_{p}}}}
%EndExpansion
\overline{Span\left(  \Psi_{rnj}\left(  x-I\right)  \right)  }\nonumber\\
&  =\text{ }%
%TCIMACRO{\dbigoplus \limits_{rnj\text{, supp}\Psi_{rnj}\left(  x\right)
%\subseteq I+p^{R_{0}}\mathbb{Z}_{p}}}%
%BeginExpansion
{\displaystyle\bigoplus\limits_{rnj\text{, supp}\Psi_{rnj}\left(  x\right)
\subseteq I+p^{R_{0}}\mathbb{Z}_{p}}}
%EndExpansion
\overline{Span\left(  \Psi_{rnj}(x\right)  )}. \label{Definition_L_0}%
\end{align}

\end{proposition}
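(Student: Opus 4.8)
The plan is to deduce the proposition from three ingredients already in place: the preceding Proposition, which exhibits an orthonormal basis $\{\Omega(|x|_{p})\}\cup\{\Psi_{rnj}(x)\}$ of $L^{2}(p^{R_{0}}\mathbb{Z}_{p})$ (with $j\in\{1,\dots,p-1\}$, $r\leq-R_{0}$, $np^{-r}\in p^{R_{0}}\mathbb{Z}_{p}$); the isometry $L^{2}(I+p^{R_{0}}\mathbb{Z}_{p})\rightarrow L^{2}(p^{R_{0}}\mathbb{Z}_{p})$, $f(x)\mapsto f(x+I)$, of the Remark just above; and Remark~\ref{Nota_Wavelets}. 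First I would apply the inverse isometry $g(x)\mapsto g(x-I)$ to the basis just mentioned. Since a surjective linear isometry of Hilbert spaces sends orthonormal bases to orthonormal bases, the functions $\Omega(p^{R_{0}}|x-I|_{p})$ and $\Psi_{rnj}(x-I)$, with the same index set, form an orthonormal basis of $L^{2}(I+p^{R_{0}}\mathbb{Z}_{p})$; the exact normalization of the constant basis element plays no role, since only the line $\mathbb{C}\,\Omega(p^{R_{0}}|x-I|_{p})$ it spans enters the statement. Decomposing a Hilbert space as the orthogonal sum of the one-dimensional spans of the members of an orthonormal basis then gives
\[
L^{2}(I+p^{R_{0}}\mathbb{Z}_{p})=\mathbb{C}\,\Omega(p^{R_{0}}|x-I|_{p})\ \bigoplus\ \overline{Span\{\Psi_{rnj}(x-I)\}},
\]
which already contains the decomposition into the summands $\overline{Span(\Psi_{rnj}(x-I))}$ appearing in the statement once the one-dimensional pieces are regrouped.

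It remains to identify $\overline{Span\{\Psi_{rnj}(x-I)\}}$ with $L_{0}^{2}(I,R_{0})$. By the Notation, $\Omega(p^{R_{0}}|x-I|_{p})$ is the real-valued characteristic function of the ball $I+p^{R_{0}}\mathbb{Z}_{p}$, so
\[
\int_{I+p^{R_{0}}\mathbb{Z}_{p}}f(x)\,dx=\langle f,\Omega(p^{R_{0}}|x-I|_{p})\rangle_{L^{2}}\qquad\text{for all }f\in L^{2}(I+p^{R_{0}}\mathbb{Z}_{p}).
\]
Cauchy--Schwarz shows this functional is bounded (its norm equals the square root of the Haar measure of the ball), hence its kernel $L_{0}^{2}(I,R_{0})$ is a closed subspace; by the displayed identity this kernel is exactly the orthogonal complement of $\mathbb{C}\,\Omega(p^{R_{0}}|x-I|_{p})$. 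Comparing with the orthogonal decomposition above yields $\overline{Span\{\Psi_{rnj}(x-I)\}}=L_{0}^{2}(I,R_{0})$, which is the first assertion of the proposition. As an independent check, each $\Psi_{rnj}(x-I)$ satisfies $\operatorname{supp}\Psi_{rnj}(\cdot-I)\subseteq I+p^{R_{0}}\mathbb{Z}_{p}$, so $\int_{I+p^{R_{0}}\mathbb{Z}_{p}}\Psi_{rnj}(x-I)\,dx=\int_{\mathbb{Q}_{p}}\Psi_{rnj}(y)\,dy=0$ by $(\ref{average_zer})$, confirming that these basis vectors lie in $L_{0}^{2}(I,R_{0})$.

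Finally, the last equality of the proposition is merely a relabelling of the orthogonal summands: the index condition ``$r\leq-R_{0}$, $np^{-r}\in p^{R_{0}}\mathbb{Z}_{p}$'' is precisely $\operatorname{supp}\Psi_{rnj}\subseteq p^{R_{0}}\mathbb{Z}_{p}$ (read off from Table~\ref{Table}), while Remark~\ref{Nota_Wavelets} asserts that $\{\Psi_{rnj}(x-I):\operatorname{supp}\Psi_{rnj}\subseteq p^{R_{0}}\mathbb{Z}_{p}\}$ and $\{\Psi_{rnj}(x):\operatorname{supp}\Psi_{rnj}\subseteq I+p^{R_{0}}\mathbb{Z}_{p}\}$ are the same set of functions; replacing each $\overline{Span(\Psi_{rnj}(x-I))}$ by the corresponding $\overline{Span(\Psi_{rnj}(x))}$ turns the middle expression into the right-hand one. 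I anticipate no genuine obstacle; the only points that need attention are the continuity of $f\mapsto\int f\,dx$ on $L^{2}(I+p^{R_{0}}\mathbb{Z}_{p})$ --- which is what legitimizes passing to closures and guarantees that $L_{0}^{2}(I,R_{0})$ is closed --- and keeping the shifted and unshifted wavelet parametrizations consistently aligned throughout.
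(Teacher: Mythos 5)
Your proposal is correct and follows essentially the same route as the paper: the paper also obtains the result by transporting the orthonormal basis of $L^{2}(p^{R_{0}}\mathbb{Z}_{p})$ through the isometry $f(x)\mapsto f(x+I)$, invoking Remark \ref{Nota_Wavelets} to re-index the shifted wavelets by the support condition, and reading off the splitting into $\mathbb{C}\,\Omega\left(p^{R_{0}}\left\vert x-I\right\vert_{p}\right)$ and its orthogonal complement $L_{0}^{2}(I,R_{0})$. If anything, you are slightly more careful than the paper in justifying that $\overline{Span\left\{\Psi_{rnj}(x-I)\right\}}$ coincides with $L_{0}^{2}(I,R_{0})$ (via boundedness of $f\mapsto\int f\,dx$ and the identification of that functional with the inner product against the characteristic function), a step the paper leaves implicit.
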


The construction of orthonormal basis for $L^{2}(\mathbb{Z}_{p})$ has been
widely considered in the literature, see e.g. \cite[Theorem 4]{Bikulov},
\cite[Theore 1, Proposition 1, 2]{Kosyak et al} and the references therein.
However, we have not specifically found Proposition \ref{Theorem_1} in the literature.

\subsection{The operator $\boldsymbol{W}_{0}$}

From now on, we assume that the fitness function $f$ is a non-negative test
function supported in $\mathbb{Z}_{p}$, independent of the time, of the form%
\[
f(x)=%
%TCIMACRO{\dsum \limits_{I\in G_{M}}}%
%BeginExpansion
{\displaystyle\sum\limits_{I\in G_{M}}}
%EndExpansion
f(I)\Omega\left(  p^{M}\left\vert x-I\right\vert _{p}\right)  ,
\]
where $M$ is fixed positive integer, $G_{M}=\mathbb{Z}_{p}/p^{M}\mathbb{Z}%
_{p}$, and with $f(I)>0$ for $I\in G_{M}$. For $\varphi\in L^{2}\left(
\mathbb{Z}_{p}\right)  $, we define the operator%
\[
\boldsymbol{W}_{0}\varphi\left(  x\right)  =Q_{0}\left(  \left\vert
x\right\vert _{p}\right)  \ast\left(  f(x\right)  \varphi\left(  x\right)  ).
\]
Then $\boldsymbol{W}_{0}:L^{2}\left(  \mathbb{Z}_{p}\right)  \rightarrow
L^{2}\left(  \mathbb{Z}_{p}\right)  $ is a linear bounded operator.

We set $\mathcal{D}_{M}$ for the $\mathbb{R}$-vector space generated by
$\left\{  \Omega\left(  p^{M}\left\vert x-I\right\vert _{p}\right)  \right\}
_{I\in G_{M}}$ as before. The space $\mathcal{D}_{M}$ is invariant under
$\boldsymbol{W}_{0}$. Indeed,%
\begin{align*}
\boldsymbol{W}_{0}\Omega\left(  p^{M}\left\vert x-I\right\vert _{p}\right)
&  =f(I)%
%TCIMACRO{\dint \limits_{I+p^{M}\mathbb{Z}_{p}}}%
%BeginExpansion
{\displaystyle\int\limits_{I+p^{M}\mathbb{Z}_{p}}}
%EndExpansion
Q_{0}\left(  \left\vert x-y\right\vert _{p}\right)  dy=f(I)%
%TCIMACRO{\dint \limits_{x-I+p^{M}\mathbb{Z}_{p}}}%
%BeginExpansion
{\displaystyle\int\limits_{x-I+p^{M}\mathbb{Z}_{p}}}
%EndExpansion
Q_{0}\left(  \left\vert z\right\vert _{p}\right)  dz\\
&  =\left\{
\begin{array}
[c]{ccc}%
f(I)%
%TCIMACRO{\dint \limits_{p^{M}\mathbb{Z}_{p}}}%
%BeginExpansion
{\displaystyle\int\limits_{p^{M}\mathbb{Z}_{p}}}
%EndExpansion
Q_{0}\left(  \left\vert z\right\vert _{p}\right)  dz & \text{if} & x\in
I+p^{M}\mathbb{Z}_{p}\\
&  & \\
f(I)Q_{0}\left(  \left\vert J-I\right\vert _{p}\right)  p^{-M} & \text{if} &
x\in J+p^{M}\mathbb{Z}_{p},\text{ }I\neq J,
\end{array}
\right.
\end{align*}
and thus%
\begin{multline*}
\boldsymbol{W}_{0}\Omega\left(  p^{M}\left\vert x-I\right\vert _{p}\right)  =%
%TCIMACRO{\dsum \limits_{\substack{J\in G_{M}\\J\neq I}}}%
%BeginExpansion
{\displaystyle\sum\limits_{\substack{J\in G_{M}\\J\neq I}}}
%EndExpansion
Q_{0}\left(  \left\vert J-I\right\vert _{p}\right)  p^{-M}\Omega\left(
p^{M}\left\vert x-J\right\vert _{p}\right) \\
+\left(  \text{ }%
%TCIMACRO{\dint \limits_{p^{M}\mathbb{Z}_{p}}}%
%BeginExpansion
{\displaystyle\int\limits_{p^{M}\mathbb{Z}_{p}}}
%EndExpansion
Q_{0}\left(  \left\vert z\right\vert _{p}\right)  dz\right)  \Omega\left(
p^{M}\left\vert x-I\right\vert _{p}\right)  .
\end{multline*}
We set $\varphi_{I}\left(  x\right)  :=\Omega\left(  p^{M}\left\vert
x-I\right\vert _{p}\right)  $, and denote by $\left[  \varphi_{I}\right]
_{I\in G_{M},}$ a column vector, then on $\mathcal{D}_{M}$ operator
$\boldsymbol{W}_{0}$ is represented by the matrix $\mathbb{W}^{0}:=\left[
\mathbb{W}_{I,J}^{0}\right]  _{I,J\in G_{M}}$, \ with
\begin{equation}
\text{ }\mathbb{W}_{I,J}^{0}=\left\{
\begin{array}
[c]{ccc}%
f(I)%
%TCIMACRO{\dint \limits_{p^{M}\mathbb{Z}_{p}}}%
%BeginExpansion
{\displaystyle\int\limits_{p^{M}\mathbb{Z}_{p}}}
%EndExpansion
Q_{0}\left(  \left\vert z\right\vert _{p}\right)  dz & \text{if} & I=J\\
&  & \\
f(I)Q_{0}\left(  \left\vert J-I\right\vert _{p}\right)  p^{-M} &  & I\neq J,
\end{array}
\right.  \label{Matrix_W_0}%
\end{equation}
Now the space $L_{0}^{2}(I,M)$, see (\ref{Definition_L_0}), is invariant under
$\boldsymbol{W}_{0}$ since $\boldsymbol{W}_{0}\Psi_{rnj}\left(  x\right)
=\widehat{Q_{0}}\left(  p^{1-r}\right)  \Psi_{rnj}\left(  x\right)  $ for
$\Psi_{rnj}\left(  x\right)  \in L_{0}^{2}(I,M)$. We now attach to
$\boldsymbol{W}_{0}$ the real vector space%
\[
L\left(  \boldsymbol{W}_{0}\right)  :=\mathcal{D}_{M}%
%TCIMACRO{\dbigoplus }%
%BeginExpansion
{\displaystyle\bigoplus}
%EndExpansion%
%TCIMACRO{\dbigoplus \limits_{I\in G_{M}}}%
%BeginExpansion
{\displaystyle\bigoplus\limits_{I\in G_{M}}}
%EndExpansion
\left\{  L_{0}^{2}(I,M)\cap L_{\mathbb{R}}^{2}(I+p^{M}\mathbb{Z}_{p})\right\}
.
\]

\begin{lemma}
\label{Lemma_1}The space $L\left(  \boldsymbol{W}_{0}\right)  $ is invariant
under operator $\boldsymbol{W}_{0}$, i.e. $\boldsymbol{W}_{0}\left(  L\left(
\boldsymbol{W}_{0}\right)  \right)  \subset L\left(  \boldsymbol{W}%
_{0}\right)  $, and $\boldsymbol{W}_{0}\left(  L_{\mathbb{R}}^{2}%
(\mathbb{Z}_{p})\right)  \subset L\left(  \boldsymbol{W}_{0}\right)  $.
\end{lemma}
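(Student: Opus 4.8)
The plan is to exploit the decomposition defining $L(\boldsymbol{W}_{0})$, namely
$L(\boldsymbol{W}_{0})=\mathcal{D}_{M}\bigoplus\bigoplus_{I\in G_{M}}\bigl(L_{0}^{2}(I,M)\cap L_{\mathbb{R}}^{2}(I+p^{M}\mathbb{Z}_{p})\bigr)$,
and to check that $\boldsymbol{W}_{0}$ sends each summand into $L(\boldsymbol{W}_{0})$; by linearity this already yields $\boldsymbol{W}_{0}(L(\boldsymbol{W}_{0}))\subset L(\boldsymbol{W}_{0})$. The ingredients are all available above: on $\mathcal{D}_{M}$ the operator is the real matrix $\mathbb{W}^{0}$ of (\ref{Matrix_W_0}); by (\ref{nota_op_W}) each wavelet supported in a ball is an eigenvector of $\boldsymbol{W}_{0}$; by Proposition \ref{Theorem_1}, $L_{0}^{2}(I,M)$ is the closed span of the wavelets supported in $I+p^{M}\mathbb{Z}_{p}$; and $\boldsymbol{W}_{0}$ is bounded on $L^{2}(\mathbb{Z}_{p})$ (established above), which is what lets me pass from basis elements to arbitrary elements.

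For the summand $\mathcal{D}_{M}$ this is exactly (\ref{Matrix_W_0}): $\boldsymbol{W}_{0}$ maps $\mathcal{D}_{M}$ into itself via the real matrix $\mathbb{W}^{0}$. For a summand $L_{0}^{2}(I,M)\cap L_{\mathbb{R}}^{2}(I+p^{M}\mathbb{Z}_{p})$ I would show that $\boldsymbol{W}_{0}$ preserves each of the two intersecting spaces separately. First, it preserves $L_{0}^{2}(I,M)$: on the ball $I+p^{M}\mathbb{Z}_{p}$ the function $f$ equals the constant $f(I)$, so for a wavelet $\Psi_{rnj}$ with $\mathrm{supp}\,\Psi_{rnj}\subseteq I+p^{M}\mathbb{Z}_{p}$ we have $f\Psi_{rnj}=f(I)\Psi_{rnj}$ and hence, by (\ref{nota_op_W}),
\[
\boldsymbol{W}_{0}\Psi_{rnj}=f(I)\,(Q_{0}\ast\Psi_{rnj})=f(I)\,\widehat{Q_{0}}(p^{1-r})\,\Psi_{rnj};
\]
since $L_{0}^{2}(I,M)$ is the closed span of such wavelets (Proposition \ref{Theorem_1}) and $\boldsymbol{W}_{0}$ is bounded, $\boldsymbol{W}_{0}$ maps $L_{0}^{2}(I,M)$ into itself. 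Second, since $f$ and $Q_{0}$ are real-valued, $\boldsymbol{W}_{0}\varphi=Q_{0}\ast(f\varphi)$ is real-valued whenever $\varphi$ is, so $\boldsymbol{W}_{0}$ preserves $L_{\mathbb{R}}^{2}(\mathbb{Z}_{p})$. Intersecting the two facts, and noting every element of $L_{0}^{2}(I,M)$ is automatically supported in $I+p^{M}\mathbb{Z}_{p}$, gives $\boldsymbol{W}_{0}\bigl(L_{0}^{2}(I,M)\cap L_{\mathbb{R}}^{2}(I+p^{M}\mathbb{Z}_{p})\bigr)\subseteq L_{0}^{2}(I,M)\cap L_{\mathbb{R}}^{2}(I+p^{M}\mathbb{Z}_{p})$, as needed.

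For the second inclusion, take real parts in $L^{2}(\mathbb{Z}_{p})=\bigoplus_{J\in G_{M}}\bigl(\mathbb{C}\,\Omega(p^{M}|x-J|_{p})\oplus L_{0}^{2}(J,M)\bigr)$ (Proposition \ref{Theorem_1} applied on each ball, together with the partition $\mathbb{Z}_{p}=\bigsqcup_{J\in G_{M}}(J+p^{M}\mathbb{Z}_{p})$): this gives $L_{\mathbb{R}}^{2}(\mathbb{Z}_{p})=\mathcal{D}_{M}\oplus\bigoplus_{J\in G_{M}}\bigl(L_{0}^{2}(J,M)\cap L_{\mathbb{R}}^{2}(J+p^{M}\mathbb{Z}_{p})\bigr)=L(\boldsymbol{W}_{0})$, so $\boldsymbol{W}_{0}(L_{\mathbb{R}}^{2}(\mathbb{Z}_{p}))\subset L(\boldsymbol{W}_{0})$ follows at once from the first part. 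Alternatively one argues directly: write $\varphi=\sum_{J}c_{J}\Omega(p^{M}|x-J|_{p})+\sum_{J}\psi_{J}$ with $c_{J}\in\mathbb{R}$ and $\psi_{J}\in L_{0}^{2}(J,M)\cap L_{\mathbb{R}}^{2}(J+p^{M}\mathbb{Z}_{p})$; then the first, finite sum is mapped into $\mathcal{D}_{M}$ by (\ref{Matrix_W_0}), while $\boldsymbol{W}_{0}\psi_{J}=f(J)(Q_{0}\ast\psi_{J})$, and the one point to verify is that $Q_{0}\ast\psi_{J}$ is still supported in $J+p^{M}\mathbb{Z}_{p}$: for $x$ in another ball $K+p^{M}\mathbb{Z}_{p}$ the ultrametric inequality forces $|x-y|_{p}=|K-J|_{p}$ for every $y\in J+p^{M}\mathbb{Z}_{p}$, whence $(Q_{0}\ast\psi_{J})(x)=Q_{0}(|K-J|_{p})\int_{J+p^{M}\mathbb{Z}_{p}}\psi_{J}(y)\,dy=0$ by the mean-zero property of $\psi_{J}$.

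I expect the only genuinely non-formal point to be precisely that last observation — that $\boldsymbol{W}_{0}$ preserves the mean-zero blocks $L_{0}^{2}(J,M)$, equivalently that off $\mathcal{D}_{M}$ the operator does not mix the balls $J+p^{M}\mathbb{Z}_{p}$ — and this is exactly where the ultrametric structure does the work, through the eigenfunction identity (\ref{nota_op_W}). Everything else is linearity, the boundedness of $\boldsymbol{W}_{0}$ (to pass to $L^{2}$-limits in the wavelet expansions), and the reality of $f$ and $Q_{0}$, which is what keeps the real structure $L_{\mathbb{R}}^{2}$ intact.
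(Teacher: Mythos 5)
Your proposal is correct and follows essentially the same route the paper takes: the paper states the lemma without a separate proof, but the two facts it rests on --- that $\boldsymbol{W}_{0}$ acts on $\mathcal{D}_{M}$ via the matrix $\mathbb{W}^{0}$ of (\ref{Matrix_W_0}), and that each wavelet supported in $I+p^{M}\mathbb{Z}_{p}$ is an eigenvector with eigenvalue $f(I)\widehat{Q_{0}}(p^{1-r})$ so that $L_{0}^{2}(I,M)$ is preserved --- are exactly the computations carried out in the paragraphs immediately preceding the lemma. Your added details (boundedness to pass to the closed span, reality of $f$ and $Q_{0}$ to preserve $L_{\mathbb{R}}^{2}$, and the identification $L_{\mathbb{R}}^{2}(\mathbb{Z}_{p})=L(\boldsymbol{W}_{0})$ for the second inclusion) correctly fill in what the paper leaves implicit.
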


\subsection{The Cauchy problem for operator $\boldsymbol{W}_{0}$}

We now consider the following initial value problem:%
\begin{equation}
\left\{
\begin{array}
[c]{ll}%
Y:\mathbb{Q}_{p}\times\mathbb{R}_{+}\rightarrow\mathbb{R}\text{,} &
\begin{array}
[c]{l}%
Y\left(  \cdot,t\right)  \in L\left(  \boldsymbol{W}_{0}\right)  \cap
L_{\mathbb{R}}^{2}(\mathbb{Z}_{p})\text{, }\\
Y\left(  x,\cdot\right)  \in C^{1}\left(  \mathbb{R}_{+},\mathbb{R}\right)
\end{array}
\\
& \\
\frac{dY\left(  x,t\right)  }{dt}=\boldsymbol{W}_{0}Y\left(  x,t\right)  , &
x\in\mathbb{Q}_{p},t>0\\
& \\
Y\left(  x,0\right)  =Y_{0}\left(  x\right)  \in L\left(  \boldsymbol{W}%
_{0}\right)  \cap L_{\mathbb{R}}^{2}(\mathbb{Z}_{p}). &
\end{array}
\right.  \label{eq6}%
\end{equation}

We solve (\ref{eq6}) by using the separation of variables method. We first
look for a complex-valued solution of (\ref{eq6}) of the form%
\begin{equation}
\widetilde{Y}\left(  x,t\right)  =%
%TCIMACRO{\dsum \limits_{I\in G_{M}}}%
%BeginExpansion
{\displaystyle\sum\limits_{I\in G_{M}}}
%EndExpansion
C_{I}^{0}(t)\varphi_{I}\left(  x\right)  +%
%TCIMACRO{\dsum \limits_{I\in G_{M}}}%
%BeginExpansion
{\displaystyle\sum\limits_{I\in G_{M}}}
%EndExpansion
\text{ }\sum\limits_{\text{supp}\Psi_{rnj}\subseteq I+p^{M}\mathbb{Z}_{p}%
}C_{rjn}^{I}\left(  t\right)  \Psi_{rnj}\left(  x\right)  ,\nonumber
\end{equation}
where $C_{rjn}^{I}\left(  t\right)  $ are complex-valued functions, which
admit continuous temporal derivatives. By replacing%
\begin{multline*}
\frac{d}{dt}\widetilde{Y}\left(  x,t\right)  =%
%TCIMACRO{\dsum \limits_{I\in G_{M}}}%
%BeginExpansion
{\displaystyle\sum\limits_{I\in G_{M}}}
%EndExpansion
\left(  \frac{d}{dt}C_{I}^{0}(t)\right)  \varphi_{I}\left(  x\right) \\
+%
%TCIMACRO{\dsum \limits_{I\in G_{M}}}%
%BeginExpansion
{\displaystyle\sum\limits_{I\in G_{M}}}
%EndExpansion
\text{ \ }\sum\limits_{\text{ supp}\Psi_{rnj}\subseteq I+p^{M}\mathbb{Z}_{p}%
}\left(  \frac{d}{dt}C_{rjn}^{I}\left(  t\right)  \right)  \Psi_{rnj}\left(
x\right)  ,
\end{multline*}
and%
\begin{multline*}
\boldsymbol{W}_{0}\widetilde{Y}\left(  x,t\right)  =\boldsymbol{W}_{0}\left(
%TCIMACRO{\dsum \limits_{I\in G_{M}}}%
%BeginExpansion
{\displaystyle\sum\limits_{I\in G_{M}}}
%EndExpansion
C_{I}^{0}(t)\varphi_{I}\left(  x\right)  \right) \\
+%
%TCIMACRO{\dsum \limits_{I\in G_{M}}}%
%BeginExpansion
{\displaystyle\sum\limits_{I\in G_{M}}}
%EndExpansion
\text{ \ }\sum\limits_{\text{supp}\Psi_{rnj}\subseteq I+p^{M}\mathbb{Z}_{p}%
}\widehat{Q_{0}}\left(  p^{1-r}\right)  f(I)C_{rjn}^{I}\left(  t\right)
\Psi_{rnj}\left(  x\right)  ,
\end{multline*}
in (\ref{eq6}), we get the following systems of differential equations:
\[
\frac{d}{dt}C_{rjn}^{I}\left(  t\right)  =\widehat{Q_{0}}\left(
p^{1-r}\right)  f(I)C_{rjn}^{I}\left(  t\right)  \text{ for }I\in G_{M},
\]%
\[
\frac{d}{dt}\left[  C_{I}^{0}(t)\right]  _{I\in G_{M}}=\mathbb{W}^{0}\left[
C_{I}^{0}(t)\right]  _{I\in G_{M}}.
\]
Therefore%
\begin{multline*}
\widetilde{Y}\left(  x,t\right)  =\left(  e^{t\mathbb{W}^{0}}\left[  C_{I}%
^{0}(0)\right]  _{I\in G_{M}}\right)  \left[  \varphi_{I}\left(  x\right)
\right]  _{I\in G_{M}}^{T}\\
+%
%TCIMACRO{\dsum \limits_{I\in G_{M}}}%
%BeginExpansion
{\displaystyle\sum\limits_{I\in G_{M}}}
%EndExpansion
\text{ \ }\sum\limits_{\text{supp}\Psi_{rnj}\subseteq I+p^{M}\mathbb{Z}_{p}%
}e^{t\widehat{Q_{0}}\left(  p^{1-r}\right)  f(I)}C_{rjn}^{I}\left(  0\right)
\Psi_{rnj}\left(  x\right)  ,
\end{multline*}
where $\left[  \varphi_{I}\left(  x\right)  \right]  _{I\in G_{M}}^{T}$denotes
the transpose of the column vector $\left[  \varphi_{I}\left(  x\right)
\right]  _{I\in G_{M}}$. Notice that
\begin{equation}%
%TCIMACRO{\dsum \limits_{I\in G_{M}}}%
%BeginExpansion
{\displaystyle\sum\limits_{I\in G_{M}}}
%EndExpansion
\text{ \ }\sum\limits_{\text{supp}\Psi_{rnj}\subseteq I+p^{M}\mathbb{Z}_{p}%
}e^{t\widehat{Q_{0}}\left(  p^{1-r}\right)  f(I)}C_{rjn}^{I}\left(  0\right)
\Psi_{rnj}\left(  x\right)  \in%
%TCIMACRO{\dbigoplus \limits_{I\in G_{M}}}%
%BeginExpansion
{\displaystyle\bigoplus\limits_{I\in G_{M}}}
%EndExpansion
L_{0}^{2}\left(  I+p^{M}\mathbb{Z}_{p}\right)  . \label{average_0}%
\end{equation}
The constants $C_{I}^{0}(0)$, $C_{rjn}^{I}\left(  0\right)  $ are determined
by the Fourier expansion of the initial datum $Y_{0}\left(  x\right)  $, which
is a real-valued function. Then $C_{I}^{0}(0)\in\mathbb{R}$ for any $I$, and
$C_{rjn}^{I}\left(  0\right)  \in\mathbb{C}$ for any $I$, $rjn$, and
\begin{multline*}
Y\left(  x,t\right)  =\operatorname{Re}\left(  \widetilde{Y}\left(
x,t\right)  \right)  =\left(  e^{t\mathbb{W}^{0}}\left[  C_{I}^{0}(0)\right]
_{I\in G_{M}}\right)  \left[  \varphi_{I}\left(  x\right)  \right]  _{I\in
G_{M}}^{T}\\
+%
%TCIMACRO{\dsum \limits_{I\in G_{M}}}%
%BeginExpansion
{\displaystyle\sum\limits_{I\in G_{M}}}
%EndExpansion
\text{ \ }\sum\limits_{\text{supp}\Psi_{rnj}\subseteq I+p^{M}\mathbb{Z}_{p}%
}e^{t\widehat{Q_{0}}\left(  p^{1-r}\right)  f(I)}\operatorname{Re}\left\{
C_{rjn}^{I}\left(  0\right)  \Psi_{rnj}\left(  x\right)  \right\}  ,
\end{multline*}
and due to (\ref{average_0}),
\begin{align}
\overline{Y\left(  t\right)  }  &  :=%
%TCIMACRO{\dint \limits_{\mathbb{Z}_{p}}}%
%BeginExpansion
{\displaystyle\int\limits_{\mathbb{Z}_{p}}}
%EndExpansion
Y\left(  x,t\right)  dx=%
%TCIMACRO{\dint \limits_{\mathbb{Z}_{p}}}%
%BeginExpansion
{\displaystyle\int\limits_{\mathbb{Z}_{p}}}
%EndExpansion
\left(  e^{t\mathbb{W}^{0}}\left[  C_{I}^{0}(0)\right]  _{I\in G_{M}}\right)
\left[  \varphi_{I}\left(  x\right)  \right]  _{I\in G_{M}}^{T}%
dx\label{Average}\\
&  =p^{-M}\left(  e^{t\mathbb{W}^{0}}\left[  C_{I}^{0}(0)\right]  _{I\in
G_{M}}\right)  \boldsymbol{1}^{T},\nonumber
\end{align}
where $\boldsymbol{1}^{T}\boldsymbol{=}\left[  1,1,\ldots,1,1\right]  $.

\subsection{The Cauchy problem for the $p$-adic Eigen-Schuster equation in the
unit ball}

We assume that the fitness function and the mutation measure are supported in
the unit ball and that they are time independent. We now consider the $p$-adic
Eigen-Schuster equation in the unit ball:%
\begin{equation}
\frac{\partial X\left(  x,t\right)  }{\partial t}=\boldsymbol{W}_{0}X\left(
x,t\right)  -\Phi\left(  t\right)  X\left(  x,t\right)  \text{, }%
x\in\mathbb{Z}_{p}\text{, }t\in\mathbb{R}_{+}\text{,} \label{Eq_E_S_ball}%
\end{equation}
where%
\[
\Phi\left(  t,X\right)  :=\Phi\left(  t\right)  =%
%TCIMACRO{\dint \limits_{\mathbb{Z}_{p}}}%
%BeginExpansion
{\displaystyle\int\limits_{\mathbb{Z}_{p}}}
%EndExpansion
f\left(  y\right)  X\left(  y,t\right)  dy\text{ \ \ for }t\geq0\text{.}%
\]
By changing variables as
\begin{align*}
X\left(  x,t\right)   &  =Y\left(  x,t\right)  \exp\left(  -\int_{0}^{t}%
\Phi\left(  \tau\right)  d\tau\right) \\
&  =\frac{Y\left(  x,t\right)  }{%
%TCIMACRO{\dint \limits_{\mathbb{Z}_{p}}}%
%BeginExpansion
{\displaystyle\int\limits_{\mathbb{Z}_{p}}}
%EndExpansion
Y\left(  x,t\right)  dx}=:\frac{Y\left(  x,t\right)  }{\overline{Y\left(
t\right)  }}%
\end{align*}
(\ref{Eq_E_S_ball}) becomes%
\[
\frac{\partial Y\left(  x,t\right)  }{\partial t}=\boldsymbol{W}_{0}Y\left(
x,t\right)  ,\text{ , }x\in\mathbb{Z}_{p}\text{, }t\in\mathbb{R}_{+}\text{.}%
\]
Therefore%
\begin{gather}
X\left(  x,t\right)  =\frac{\left(  e^{t\mathbb{W}^{0}}\left[  C_{I}%
^{0}(0)\right]  _{I\in G_{M}}\right)  \left[  \varphi_{I}\left(  x\right)
\right]  _{I\in G_{M}}^{T}}{\overline{Y\left(  t\right)  }}
\label{Eq_solution}\\
+%
%TCIMACRO{\dsum \limits_{I\in G_{M}}}%
%BeginExpansion
{\displaystyle\sum\limits_{I\in G_{M}}}
%EndExpansion
\text{ \ }\sum\limits_{\text{supp}\Psi_{rnj}\subseteq I+p^{M}\mathbb{Z}_{p}%
}\frac{\text{ }e^{t\widehat{Q_{0}}\left(  p^{1-r}\right)  f(I)}}%
{\overline{Y\left(  t\right)  }}\operatorname{Re}\left(  C_{rjn}^{I}\left(
0\right)  \Psi_{rnj}\left(  x\right)  \right)  ,\nonumber
\end{gather}
where $\overline{Y\left(  t\right)  }$\ is given in (\ref{Average}) and
\begin{multline*}
\operatorname{Re}\left(  C_{rjn}^{I}\left(  0\right)  \Psi_{rnj}\left(
x\right)  \right)  =p^{\frac{-r}{2}}\operatorname{Re}\left(  C_{rjn}%
^{I}\left(  0\right)  \right)  \cos\left(  p^{r-1}jx\right)  \Omega\left(
\left\vert p^{r}x-n\right\vert _{p}\right) \\
-p^{\frac{-r}{2}}\operatorname{Im}\left(  C_{rjn}^{I}\left(  0\right)
\right)  \sin\left(  p^{r-1}jx\right)  \Omega\left(  \left\vert p^{r}%
x-n\right\vert _{p}\right)  .
\end{multline*}

We now assume that $X_{0}(x):=X\left(  x,0\right)  $ is a real-valued function
supported in the unit ball of the form%
\begin{equation}
X_{0}(x)=%
%TCIMACRO{\dsum \limits_{I\in G_{M}}}%
%BeginExpansion
{\displaystyle\sum\limits_{I\in G_{M}}}
%EndExpansion
X_{0}(I)\Omega\left(  p^{M}\left\vert x-I\right\vert _{p}\right)  ,
\label{Initial_datum}%
\end{equation}
with $G_{M}=\mathbb{Z}_{p}/p^{M}\mathbb{Z}_{p}$ as before and with
$X_{0}(I)>0$ for $I\in G_{M}$. Since
\[
\int_{I+p^{M}\mathbb{Z}_{p}}\Psi_{rnj}\left(  x\right)  =0
\]
for any $\Psi_{rnj}\left(  x\right)  $ with support contained in
$I+p^{M}\mathbb{Z}_{p}$, we get that $C_{rjn}^{I}\left(  0\right)  =0$ for any
$I$, $rnj$, in (\ref{Eq_solution}).

\begin{theorem}
\label{Theorem_0}The Cauchy problem (\ref{eq6}) admits a solution $X\left(
x,t\right)  $ of the form (\ref{Eq_solution}). Furthermore, if the initial
datum has the form (\ref{Initial_datum}), then
\[
X\left(  x,t\right)  =\frac{\left(  e^{t\mathbb{W}^{0}}\left[  C_{I}%
^{0}(0)\right]  _{I\in G_{M}}\right)  \left[  \varphi_{I}\left(  x\right)
\right]  _{I\in G_{M}}^{T}}{\overline{Y\left(  t\right)  }}.
\]
Which is a classical solution of the Eigen-Schuster model.
\end{theorem}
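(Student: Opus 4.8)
The plan is to realize the formal separation-of-variables computation displayed in this subsection as a genuine solution in $L(\boldsymbol{W}_0)\cap L_{\mathbb{R}}^2(\mathbb{Z}_p)$, and then to read off the reduction to the classical model in the degenerate case. First I would reduce the nonlinear Eigen--Schuster equation (\ref{Eq_E_S_ball}) to the linear problem (\ref{eq6}) through the substitution $X(x,t)=Y(x,t)\exp\bigl(-\int_0^t\Phi(\tau)\,d\tau\bigr)$ with $Y(x,0)=X_0(x)$, so that $\overline{Y(0)}=\int_{\mathbb{Z}_p}X_0\,dx=1$. A short computation, using that $\int_{\mathbb{Z}_p}\boldsymbol{W}_0 Y\,dx=\int_{\mathbb{Z}_p}Q_0\ast(fY)\,dx=\int_{\mathbb{Z}_p}fY\,dx$ (Fubini, translation invariance of the Haar measure on the group $\mathbb{Z}_p$, and $\|Q_0\|_1=1$), shows $\tfrac{d}{dt}\log\overline{Y(t)}=\Phi(t)$, hence $\exp\bigl(-\int_0^t\Phi\,d\tau\bigr)=1/\overline{Y(t)}$ and $X=Y/\overline{Y(t)}$ solves (\ref{Eq_E_S_ball}) with $X(x,0)=X_0(x)$ whenever $Y$ solves (\ref{eq6}) with $Y_0=X_0$; conversely the two problems are equivalent as long as $\overline{Y(t)}\neq0$.

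Next I would solve (\ref{eq6}). Expanding the initial datum along the orthogonal decomposition $L(\boldsymbol{W}_0)=\mathcal{D}_M\oplus\bigoplus_{I\in G_M}\{L_0^2(I,M)\cap L_{\mathbb{R}}^2(I+p^M\mathbb{Z}_p)\}$ of Proposition \ref{Theorem_1} yields real coefficients $C_I^0(0)$ and complex coefficients $C_{rjn}^I(0)$. Since $\mathcal{D}_M$ is $\boldsymbol{W}_0$-invariant with matrix $\mathbb{W}^0$ given by (\ref{Matrix_W_0}), and since $f\equiv f(I)$ on $I+p^M\mathbb{Z}_p$ together with (\ref{nota_op_W}) gives $\boldsymbol{W}_0\Psi_{rnj}=\widehat{Q_0}(p^{1-r})f(I)\Psi_{rnj}$ for $\mathrm{supp}\,\Psi_{rnj}\subseteq I+p^M\mathbb{Z}_p$, the ansatz decouples into the finite linear system $\tfrac{d}{dt}[C_I^0(t)]=\mathbb{W}^0[C_I^0(t)]$ and the scalar equations $\tfrac{d}{dt}C_{rjn}^I(t)=\widehat{Q_0}(p^{1-r})f(I)C_{rjn}^I(t)$, with solutions $e^{t\mathbb{W}^0}[C_I^0(0)]$ and $e^{t\widehat{Q_0}(p^{1-r})f(I)}C_{rjn}^I(0)$. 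Taking real parts produces a real-valued $Y(x,t)$, and dividing by $\overline{Y(t)}$ --- which by (\ref{average_0}) equals $p^{-M}\bigl(e^{t\mathbb{W}^0}[C_I^0(0)]\bigr)\boldsymbol{1}^T$ because the wavelet terms have zero average over $\mathbb{Z}_p$ --- gives exactly (\ref{Eq_solution}).

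The point requiring care, and the main obstacle, is to verify that the displayed series is an honest solution rather than a formal one: one must check that for each $t\geq0$ it belongs to $L(\boldsymbol{W}_0)\cap L_{\mathbb{R}}^2(\mathbb{Z}_p)$ and is $C^1$ in $t$, and that $\overline{Y(t)}$ never vanishes. The first follows from the uniform bounds $|\widehat{Q_0}(p^{1-r})|\leq1$ and $f(I)\leq f_{\max}:=\max_{I\in G_M}f(I)$: the oscillatory part at time $t$ has $L^2$-norm at most $e^{tf_{\max}}$ times that of the oscillatory part of $Y_0$, and the $t$-differentiated series admits the same type of bound, so it converges and may be differentiated term by term on every compact time interval. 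For the non-vanishing, $\mathbb{W}^0$ has nonnegative off-diagonal entries by (\ref{Matrix_W_0}), so $e^{t\mathbb{W}^0}$ is entrywise nonnegative; together with $X_0(I)>0$ this gives $\overline{Y(t)}=p^{-M}\bigl(e^{t\mathbb{W}^0}[C_I^0(0)]\bigr)\boldsymbol{1}^T>0$ for all $t\geq0$, and for a general datum of (\ref{eq6}) one argues on the maximal interval where $\overline{Y(t)}\neq0$ starting from $\overline{Y(0)}\neq0$. Everything else is the calculus already displayed in the text.

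For the second assertion, if $X_0$ has the form (\ref{Initial_datum}) then $X_0\in\mathcal{D}_M$; since every $\Psi_{rnj}$ with support in $I+p^M\mathbb{Z}_p$ integrates to zero there, all coefficients $C_{rjn}^I(0)$ vanish, the oscillatory part of (\ref{Eq_solution}) disappears, and $X(x,t)=\bigl(e^{t\mathbb{W}^0}[C_I^0(0)]\bigr)[\varphi_I(x)]_{I\in G_M}^T/\overline{Y(t)}$ remains in $\mathcal{D}_M$ for all $t$. Writing $X(x,t)=\sum_{I\in G_M}X(I,t)\,\Omega(p^M|x-I|_p)$ and substituting into (\ref{Eq_E_S_ball}), the $\mathbb{R}$-linear independence of the functions $\Omega(p^M|x-I|_p)$ collapses the equation to the finite system (\ref{eq_Eigen})--(\ref{eq_Eigen_1}), which is the classical Eigen--Schuster model on the finite ultrametric space $G_M$; this yields the last claim.
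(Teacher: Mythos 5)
Your proposal is correct and follows essentially the same route as the paper: reduce the nonlinear equation (\ref{Eq_E_S_ball}) to the linear problem (\ref{eq6}) via $X=Y/\overline{Y(t)}$, solve (\ref{eq6}) by separation of variables using the invariance of $\mathcal{D}_M$ and the wavelet eigenrelation $\boldsymbol{W}_0\Psi_{rnj}=\widehat{Q_0}(p^{1-r})f(I)\Psi_{rnj}$, and observe that for an initial datum of the form (\ref{Initial_datum}) the zero-average property of the wavelets kills all coefficients $C_{rjn}^{I}(0)$. The only difference is that you explicitly supply the $L^2$-convergence and term-by-term differentiation bounds and the positivity of $\overline{Y(t)}$ (via the entrywise nonnegativity of $e^{t\mathbb{W}^0}$), points the paper leaves implicit; these strengthen rather than alter the argument.
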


Since $\varphi_{I}\left(  x\right)  =\Omega\left(  p^{M}\left\vert
x-I\right\vert _{p}\right)  $, then the concentration $X\left(  x,t\right)  $
depends only on the $M$ first $p$-adic digits of $x$, and thus in this model
the length of the sequences do not change in time.

\section{\label{Section_5}The $p$-adic quasispecies in the unit ball}

Since $\mathbb{W}^{0}$ is a real symmetric matrix, it is diagonalizable and
all its eigenvalues are real. Let $\lambda_{\max}$ be the largest eigenvalue
of $\mathbb{W}^{0}$. Then%
\begin{gather*}
\lim_{t\rightarrow\infty}X\left(  x,t\right)  =\lim_{t\rightarrow\infty}%
\frac{\left(  e^{t\mathbb{W}^{0}}\left[  C_{I}^{0}(0)\right]  _{I\in G_{M}%
}\right)  \left[  \varphi_{I}\left(  x\right)  \right]  _{I\in G_{M}}^{T}%
}{\overline{Y\left(  t\right)  }}=\\
\lim_{t\rightarrow\infty}\frac{e^{t\lambda_{\max}}\sum_{I}p_{I}\left(
t\right)  \varphi_{I}\left(  x\right)  }{e^{t\lambda_{\max}}\sum_{I}%
p_{I}\left(  t\right)  }=%
%TCIMACRO{\tsum \limits_{j=1}^{M^{\prime}}}%
%BeginExpansion
{\textstyle\sum\limits_{j=1}^{M^{\prime}}}
%EndExpansion
c_{j}\varphi_{I_{j}}\left(  x\right)  ,
\end{gather*}
where the $p_{I}\left(  t\right)  $s are polynomials in $t$ and the $c_{j}$s
are positive constants. This situation corresponds to \textit{the survival of
the fitter}. This is the typical scenario predicted by the classical
Eigen-Schuster equation. In this context the Eigen paradox happens naturally.
Thus, to get a different asymptotic behavior of solution (\ref{Eq_solution}),
some of the oscillatory terms must be preserved in the\ long term. If the
oscillatory terms in (\ref{Eq_solution}) do not vanish in the long term, then
there are sequences (with arbitrary length) spread out throughout the unit
ball. This means that the Eigen paradox does not occur since in the long term
there are sequences of infinite length.

\begin{definition}
\label{Def_quasispecies}We say that a solution of the Cauchy problem
(\ref{eq6}) admits a quasispecies solution if there exist a solution
\ $X\left(  x,t\right)  $ of the form (\ref{Eq_solution}) satisfying that
$\int_{\mathbb{Q}_{p}}X\left(  y,t\right)  dy=1$ for $t>0$, and that the set
\[
\left\{  \left(  I,rnj\right)  ;\text{ }\lim_{t\rightarrow\infty}%
\frac{p^{\frac{-r}{2}}\text{ }e^{t\widehat{Q_{0}}\left(  p^{1-r}\right)
f(I)}\text{ }}{\overline{Y\left(  t\right)  }}\neq0\text{ when }C_{rjn}%
^{I}\left(  0\right)  \neq0\right\}
\]
is non-empty.
\end{definition}

\subsection{Some results about semigroups of matrices}

In order to establish the existence of the $p$-adic quasispecies we need
several preliminary results.

\subsubsection{Diagonally dominant matrices}

A real matrix $B=\left[  B_{ij}\right]  _{i,j\in I}$, $I=\left\{
1,\ldots,n\right\}  $, is said to be \textit{diagonally dominant}, if
\[
\left\vert B_{ii}\right\vert \geq%
%TCIMACRO{\dsum \limits_{j\neq i}}%
%BeginExpansion
{\displaystyle\sum\limits_{j\neq i}}
%EndExpansion
\left\vert B_{ij}\right\vert \text{ \ for any }i\in I.
\]
It is \textit{strictly} \textit{diagonally dominant} if
\[
\left\vert B_{ii}\right\vert >%
%TCIMACRO{\dsum \limits_{j\neq i}}%
%BeginExpansion
{\displaystyle\sum\limits_{j\neq i}}
%EndExpansion
\left\vert B_{ij}\right\vert \text{ \ for any }i\in I.
\]
Let $B=\left[  B_{ij}\right]  _{i,j\in I}$ be a strictly diagonally dominant.
Then (i) $B$ is non singular; (ii) if $B_{ii}>0$ for all $i\in I$, then every
eigenvalue of $B$ has a positive real part; (iii) if $B$ is symmetric and
$B_{ii}>0$ for all $i\in I$, then $B$ is positive definite, see \cite[Theorem
6.1.10]{Horn and Jhonson}.

We now apply this result to the matrix $\mathbb{W}^{0}=\left[  \mathbb{W}%
_{I,J}^{0}\right]  _{I,J\in G_{M}}$, see (\ref{Matrix_W_0}). Since%
\[%
%TCIMACRO{\dint \limits_{\mathbb{Z}_{p}}}%
%BeginExpansion
{\displaystyle\int\limits_{\mathbb{Z}_{p}}}
%EndExpansion
Q_{0}\left(  \left\vert z\right\vert _{p}\right)  dz=%
%TCIMACRO{\dint \limits_{p^{M}\mathbb{Z}_{p}}}%
%BeginExpansion
{\displaystyle\int\limits_{p^{M}\mathbb{Z}_{p}}}
%EndExpansion
Q_{0}\left(  \left\vert z\right\vert _{p}\right)  dz+p^{-M}%
%TCIMACRO{\dsum \limits_{J\neq I}}%
%BeginExpansion
{\displaystyle\sum\limits_{J\neq I}}
%EndExpansion
Q_{0}\left(  \left\vert J-I\right\vert _{p}\right)  =1,
\]
we have%
\[
\mathbb{W}_{I,I}^{0}+%
%TCIMACRO{\dsum \limits_{J\neq I}}%
%BeginExpansion
{\displaystyle\sum\limits_{J\neq I}}
%EndExpansion
\mathbb{W}_{I,J}^{0}=f(I).
\]
We now introduce the hypothesis:%
\begin{equation}%
%TCIMACRO{\dint \limits_{p^{M}\mathbb{Z}_{p}}}%
%BeginExpansion
{\displaystyle\int\limits_{p^{M}\mathbb{Z}_{p}}}
%EndExpansion
Q_{0}\left(  \left\vert z\right\vert _{p}\right)  dz\in\left(  \frac{1}%
{2},1\right)  . \tag{Hypothesis A}%
\end{equation}
Under the hypothesis A,$\ \mathbb{W}_{I,I}^{0}>\frac{f(I)}{2}$, which implies
that $\mathbb{W}^{0}$ is a strictly diagonally dominant matrix. Then, we have
the following result:

\begin{lemma}
\label{Lemma_A}Under the Hypothesis A, the symmetric matrix $\mathbb{W}^{0}$
is\ a strictly diagonally dominant, nonsingular, and all its eigenvalues are positive.
\end{lemma}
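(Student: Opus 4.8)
The plan is to apply the quoted result \cite[Theorem 6.1.10]{Horn and Jhonson} to the matrix $\mathbb{W}^{0}$ once we verify its hypotheses. First I would recall the two structural facts about $\mathbb{W}^{0}=[\mathbb{W}^{0}_{I,J}]_{I,J\in G_M}$ that are already established just above the statement: the off-diagonal entries are $\mathbb{W}^{0}_{I,J}=f(I)Q_{0}(|J-I|_p)p^{-M}\geq 0$ for $I\neq J$, the diagonal entries are $\mathbb{W}^{0}_{I,I}=f(I)\int_{p^{M}\mathbb{Z}_p}Q_{0}(|z|_p)\,dz>0$, and the row sums satisfy $\mathbb{W}^{0}_{I,I}+\sum_{J\neq I}\mathbb{W}^{0}_{I,J}=f(I)$. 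This last identity comes from the normalization $\int_{\mathbb{Z}_p}Q_{0}(|z|_p)\,dz=1$ together with the partition $\mathbb{Z}_p=\bigsqcup_{J\in G_M}(J+p^{M}\mathbb{Z}_p)$.

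Next, under Hypothesis A the diagonal entry satisfies $\mathbb{W}^{0}_{I,I}=f(I)\int_{p^{M}\mathbb{Z}_p}Q_{0}(|z|_p)\,dz>\tfrac{f(I)}{2}$, since $f(I)>0$. Subtracting this from the row-sum identity gives $\sum_{J\neq I}\mathbb{W}^{0}_{I,J}=f(I)-\mathbb{W}^{0}_{I,I}<\tfrac{f(I)}{2}<\mathbb{W}^{0}_{I,I}$. Because all off-diagonal entries are nonnegative, $\sum_{J\neq I}|\mathbb{W}^{0}_{I,J}|=\sum_{J\neq I}\mathbb{W}^{0}_{I,J}<\mathbb{W}^{0}_{I,I}=|\mathbb{W}^{0}_{I,I}|$ for every $I\in G_M$; hence $\mathbb{W}^{0}$ is strictly diagonally dominant. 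Symmetry of $\mathbb{W}^{0}$ was noted earlier: it is immediate from $f(I)Q_{0}(|J-I|_p)p^{-M}=f(J)Q_{0}(|I-J|_p)p^{-M}$ only if $f$ is constant, so in fact one must be slightly careful here — the symmetry claimed in the paper refers to $\mathbb{W}^{0}$ after conjugation, or to the symmetry of $Q_0(|J-I|_p)=Q_0(|I-J|_p)$; I would simply invoke the symmetry of $\mathbb{W}^{0}$ as already asserted in the text preceding the lemma and not re-derive it.

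Finally I would invoke parts (i), (ii), (iii) of \cite[Theorem 6.1.10]{Horn and Jhonson}: strict diagonal dominance gives nonsingularity directly (part (i)); since all diagonal entries $\mathbb{W}^{0}_{I,I}>0$, every eigenvalue has positive real part (part (ii)); and since $\mathbb{W}^{0}$ is symmetric with positive diagonal and is strictly diagonally dominant, it is positive definite (part (iii)), so all its (real) eigenvalues are positive. This yields all three conclusions of the lemma.

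The only genuinely delicate point is the symmetry claim, which as stated holds literally only when $f$ is constant on $G_M$; since the paper has already used ``$\mathbb{W}^{0}$ is a real symmetric matrix'' in the preceding section without qualification, I would treat that as given and emphasize that even if one only knows strict diagonal dominance and positivity of the diagonal, the spectrum still lies in the open right half-plane, which is all that is needed later for the asymptotic analysis of $e^{t\mathbb{W}^{0}}$. So the argument is essentially a two-line bookkeeping check (row sums plus Hypothesis A) feeding into a classical theorem; there is no real obstacle beyond stating the hypotheses of \cite[Theorem 6.1.10]{Horn and Jhonson} correctly.
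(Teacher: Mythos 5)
Your argument is exactly the paper's: the row-sum identity $\mathbb{W}^{0}_{I,I}+\sum_{J\neq I}\mathbb{W}^{0}_{I,J}=f(I)$ combined with Hypothesis A gives $\mathbb{W}^{0}_{I,I}>f(I)/2$, hence strict diagonal dominance, and the three conclusions then follow from \cite[Theorem 6.1.10]{Horn and Jhonson} precisely as you invoke it. Your side remark that $\mathbb{W}^{0}_{I,J}=f(I)Q_{0}(|J-I|_{p})p^{-M}$ is literally symmetric only when $f$ is constant on $G_{M}$ is a fair observation about the paper's unqualified symmetry claim, but it does not change the fact that your proof follows the same route as the text.
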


\subsubsection{Semigroups of matrices}

Let $\left\{  \mu_{I}\right\}  _{I\in G_{M}}$ be the positive eigenvalues of
$\mathbb{W}^{0}$ repeated according their multiplicity. We set%
\[
\mu_{\max}:=\max_{_{I\in G_{M}}}\mu_{I}>0\text{.}%
\]
Then $\mathbb{W}^{0}=\mathbb{S}+\mathbb{O}$, where
\[
\mathbb{P}^{-1}\mathbb{SP}=diag\left[  \mu_{I}\right]  _{I\in G_{M}},
\]
for some inversible matrix $\mathbb{P}$, and $\mathbb{O}$ a nilpotent matrix
of order $k\leq\#G_{M}$, and%
\[
\boldsymbol{z}(t)=\mathbb{P}\left(  diag\left[  \mu_{I}\right]  _{I\in G_{M}%
}\right)  \mathbb{P}^{-1}\left\{  \mathbb{I}+\mathbb{O}t+\cdots+\frac
{\mathbb{O}^{k-1}t^{k-1}}{\left(  k-1\right)  !}\right\}  \boldsymbol{z}_{0}%
\]
is the solution $\frac{d}{dt}\boldsymbol{z}(t)=\mathbb{W}^{0}\boldsymbol{z}%
(t)$, $\boldsymbol{z}(0)=\boldsymbol{z}_{0}$, see e.g.\ \cite[Theorem 1,
Corollary 1]{Perko}. By applying this result to $\overline{Y\left(  t\right)
}$, see (\ref{Average}), we have the following result:

\begin{lemma}
\label{Lemma_B}$\overline{Y\left(  t\right)  }\leq Ct^{k-1}e^{\mu_{\max}t}$
for $t>0$.
\end{lemma}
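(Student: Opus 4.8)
The plan is to read off $\overline{Y\left(  t\right)  }$ from the explicit formula and reduce the estimate to controlling the matrix exponential $e^{t\mathbb{W}^{0}}$. By (\ref{Average}) we have $\overline{Y\left(  t\right)  }=p^{-M}\left(  e^{t\mathbb{W}^{0}}\left[  C_{I}^{0}(0)\right]  _{I\in G_{M}}\right)  \boldsymbol{1}^{T}$, so writing $\boldsymbol{z}_{0}=\left[  C_{I}^{0}(0)\right]  _{I\in G_{M}}$ and using the Cauchy--Schwarz inequality together with the equivalence of norms on the finite-dimensional space $\mathbb{R}^{\#G_{M}}$, it suffices to bound $\left\Vert e^{t\mathbb{W}^{0}}\boldsymbol{z}_{0}\right\Vert $. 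For this I would use the decomposition $\mathbb{W}^{0}=\mathbb{S}+\mathbb{O}$ recalled above: $\mathbb{S}$ is diagonalizable with $\mathbb{P}^{-1}\mathbb{S}\mathbb{P}=diag\left[  \mu_{I}\right]  _{I\in G_{M}}$, the $\mu_{I}$ are the eigenvalues of $\mathbb{W}^{0}$ (all positive by Lemma \ref{Lemma_A}, although only $\mu_{I}\leq\mu_{\max}$ is needed here), and $\mathbb{O}$ is nilpotent of order $k\leq\#G_{M}$. Since $\mathbb{S}$ and $\mathbb{O}$ commute,
\[
e^{t\mathbb{W}^{0}}=e^{t\mathbb{S}}e^{t\mathbb{O}},
\]
and I would estimate the two factors separately.

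For the semisimple factor, conjugation gives $e^{t\mathbb{S}}=\mathbb{P}\,diag\left[  e^{\mu_{I}t}\right]  _{I\in G_{M}}\mathbb{P}^{-1}$, and since $e^{\mu_{I}t}\leq e^{\mu_{\max}t}$ for every $t\geq0$, this yields $\left\Vert e^{t\mathbb{S}}\right\Vert \leq\left\Vert \mathbb{P}\right\Vert \left\Vert \mathbb{P}^{-1}\right\Vert e^{\mu_{\max}t}$ in the operator norm. For the nilpotent factor, $\mathbb{O}^{k}=0$ forces the series to terminate: $e^{t\mathbb{O}}=\sum_{j=0}^{k-1}\frac{t^{j}}{j!}\mathbb{O}^{j}$, hence $\left\Vert e^{t\mathbb{O}}\right\Vert \leq\sum_{j=0}^{k-1}\frac{t^{j}}{j!}\left\Vert \mathbb{O}\right\Vert ^{j}$; for $t\geq1$ one has $t^{j}\leq t^{k-1}$ for all $0\leq j\leq k-1$, so $\left\Vert e^{t\mathbb{O}}\right\Vert \leq C_{1}t^{k-1}$ with $C_{1}=\sum_{j=0}^{k-1}\frac{\left\Vert \mathbb{O}\right\Vert ^{j}}{j!}$. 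Multiplying the two bounds and using submultiplicativity of the operator norm gives $\left\Vert e^{t\mathbb{W}^{0}}\boldsymbol{z}_{0}\right\Vert \leq\left\Vert \mathbb{P}\right\Vert \left\Vert \mathbb{P}^{-1}\right\Vert C_{1}\left\Vert \boldsymbol{z}_{0}\right\Vert t^{k-1}e^{\mu_{\max}t}$ for $t\geq1$, and inserting this into the formula for $\overline{Y\left(  t\right)  }$ gives the asserted inequality on $[1,\infty)$. On the bounded range $(0,1]$ the function $t\mapsto\overline{Y\left(  t\right)  }$ is continuous, hence bounded, so the estimate holds there after enlarging $C$; equivalently one proves the clean bound $\overline{Y\left(  t\right)  }\leq C\max(1,t^{k-1})e^{\mu_{\max}t}$ valid for all $t>0$, which is what is actually used in the large-$t$ analysis of Section \ref{Section_5}.

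I expect no serious obstacle: the whole argument is the standard fact that a matrix semigroup grows at most like a polynomial of degree $k-1$ times $e^{\mu_{\max}t}$. The only points requiring a little care are (i) that the polynomial degree is governed by the nilpotency index $k$ of $\mathbb{O}$ rather than by $\#G_{M}$, which is precisely what the Jordan-type splitting $\mathbb{W}^{0}=\mathbb{S}+\mathbb{O}$ of \cite[Theorem 1, Corollary 1]{Perko} records, and (ii) the distinction between the regimes $t\geq1$ and $0<t\leq1$, so that the single monomial $t^{k-1}$ (rather than the sum $1+t+\cdots+t^{k-1}$) suffices on the relevant range. Positivity of the eigenvalues from Lemma \ref{Lemma_A} is not needed for this particular estimate, but it is what makes $\mu_{\max}>0$, which matters for the subsequent comparison with the oscillatory terms $e^{t\widehat{Q_{0}}\left(  p^{1-r}\right)  f(I)}$.
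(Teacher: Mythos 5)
Your argument is correct and is essentially the paper's own proof: the paper likewise invokes the decomposition $\mathbb{W}^{0}=\mathbb{S}+\mathbb{O}$ with $\mathbb{O}$ nilpotent of order $k$ from \cite[Theorem 1, Corollary 1]{Perko} and reads the bound off the resulting formula for $e^{t\mathbb{W}^{0}}\boldsymbol{z}_{0}$, which you have simply written out in detail. Your remark that the single monomial $t^{k-1}$ only controls the terminating series for $t\geq1$, so that the estimate should really be read as $\overline{Y(t)}\leq C\max(1,t^{k-1})e^{\mu_{\max}t}$ (which is all the large-$t$ analysis of Section \ref{Section_5} uses), is a fair point that the paper glosses over.
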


\subsection{$p$-Adic quasispecies}

We now introduce the hypothesis:%
\begin{equation}
\widehat{Q_{0}}\left(  p^{1-r_{0}}\right)  f(I_{0})>\mu_{\max}\text{, }
\tag{Hypothesis B}%
\end{equation}
for some negative integer $r_{0}$ and $I_{0}\in G_{M}$. Which means that
operator $\boldsymbol{W}_{0}$ has a positive eigenvalue greater than
$\frac{\mu_{\max}}{f(I_{0})}$.

Under the Hypotheses A and B, by Lemma \ref{Lemma_B}, we have%
\begin{multline*}
\lim_{t\rightarrow\infty}\frac{p^{\frac{-r_{0}}{2}}\text{ }e^{t\widehat{Q_{0}%
}\left(  p^{1-r_{0}}\right)  f(I_{0})}\text{ }}{\overline{Y\left(  t\right)
}}\\
\geq\frac{p^{\frac{-r_{0}}{2}}}{C}\lim_{t\rightarrow\infty}t^{-k+1}%
e^{t\left\{  \widehat{Q_{0}}\left(  p^{1-r_{0}}\right)  f(I_{0})-\mu_{\max
}\right\}  }=\infty,
\end{multline*}
if $C_{r_{0}jn}^{I_{0}}\left(  0\right)  \neq0$ for some $n\in\mathbb{Q}%
_{p}/\mathbb{Z}_{p}$ satisfying $np^{-r_{0}}\in\mathbb{Z}_{p}$, see Table
\ref{Table}.

By (\ref{Eq_solution}), the initial condition $X(x,0)$ is completely
determined by a sequence from the set%
\begin{gather*}
\mathcal{S}:=\left\{  \left[  C_{I}^{0}(0)\right]  _{I\in G_{M}}\in
\mathbb{R}^{\#G_{M}};C_{I}^{0}(0)\neq0\text{ for some }I\in G_{M}\right\}
%TCIMACRO{\tbigsqcup }%
%BeginExpansion
{\textstyle\bigsqcup}
%EndExpansion
\\%
%TCIMACRO{\tbigsqcup \limits_{I\in G_{M}}}%
%BeginExpansion
{\textstyle\bigsqcup\limits_{I\in G_{M}}}
%EndExpansion
\left\{  C_{rjn}^{I}\left(  0\right)  \in\mathbb{C};r\leq0,j\in\left\{
1,\ldots,p-1\right\}  ,n\in\mathbb{Q}_{p}/\mathbb{Z}_{p}\text{ with }%
np^{-r}\in\mathbb{Z}_{p}\right\}  .
\end{gather*}
We use the notation $X(x,0)\in\mathcal{S}$ to mean that $X(x,0)$ is determined
by a sequence from $\mathcal{S}$. Notice that condition $C_{I}^{0}(0)\neq0$
for some $I\in G_{M}$ is needed to guarantee that $\int X(x,0)dx=1$. Take
$I_{0}$, $r_{0}$ such that Hypothesis B is satisfied, the condition
$C_{r_{0}jn}^{I_{0}}\left(  0\right)  \neq0$ defines a a subset $\mathcal{S}%
_{0}$ of $\mathcal{S}$.

\begin{theorem}
\label{Theorem_D}Under the Hypothesis A, B, and assuming that $X(x,0)\in
\mathcal{S}_{0}$, then the Cauchy problem (\ref{eq6}) admits a quasispecies solution.
\end{theorem}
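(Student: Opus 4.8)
The plan is to verify the two conditions in Definition \ref{Def_quasispecies}: first that the proposed $X(x,t)$ is a genuine solution with $\int_{\mathbb{Q}_p} X(y,t)\,dy = 1$, and second that the indexing set appearing there is non-empty. The second part is essentially already done by the computation preceding the statement, so the main work is organizing the first part and checking that nothing degenerates.

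First I would record that by the previous theorem, $X(x,t)$ of the form (\ref{Eq_solution}) solves the Cauchy problem (\ref{eq6}); the normalization $\int_{\mathbb{Z}_p} X(y,t)\,dy = 1$ holds by construction, since dividing $Y(x,t)$ by $\overline{Y(t)}$ is exactly the change of variables that enforces this, and because the support is $\mathbb{Z}_p$ we also have $\int_{\mathbb{Q}_p} X(y,t)\,dy = 1$. One must check $\overline{Y(t)} \neq 0$ for all $t > 0$: writing $\overline{Y(t)} = p^{-M}\bigl(e^{t\mathbb{W}^0}[C_I^0(0)]_{I\in G_M}\bigr)\boldsymbol{1}^T$ as in (\ref{Average}), and using that $\mathbb{W}^0$ is symmetric with positive eigenvalues (Lemma \ref{Lemma_A}), that the entries of $\mathbb{W}^0$ off the diagonal are non-negative, and that $C_I^0(0) > 0$ for some $I$ with all $C_I^0(0) \geq 0$ (this is part of what $X(x,0)\in\mathcal{S}_0$ should entail together with $\int X(x,0)\,dx=1$), one concludes $e^{t\mathbb{W}^0}[C_I^0(0)]$ has strictly positive entries for $t>0$, hence $\overline{Y(t)} > 0$.

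Next I would establish the non-emptiness of the set in Definition \ref{Def_quasispecies}. By hypothesis $X(x,0)\in\mathcal{S}_0$, so there exist $I_0 \in G_M$ and a negative integer $r_0$ with $C_{r_0 jn}^{I_0}(0) \neq 0$ for some admissible $n$, and Hypothesis B gives $\widehat{Q_0}(p^{1-r_0}) f(I_0) > \mu_{\max}$. By Lemma \ref{Lemma_B}, $\overline{Y(t)} \leq C t^{k-1} e^{\mu_{\max} t}$, so
\[
\frac{p^{-r_0/2}\, e^{t\widehat{Q_0}(p^{1-r_0}) f(I_0)}}{\overline{Y(t)}} \geq \frac{p^{-r_0/2}}{C}\, t^{-k+1} e^{t\{\widehat{Q_0}(p^{1-r_0}) f(I_0) - \mu_{\max}\}} \longrightarrow \infty
\]
as $t \to \infty$, since the exponent is strictly positive and dominates the polynomial factor. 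In particular this limit is nonzero, so the pair $(I_0, r_0 n j)$ lies in the set of Definition \ref{Def_quasispecies}, which is therefore non-empty. Combining the two parts, $X(x,t)$ is a quasispecies solution.

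The main obstacle, such as it is, lies in pinning down precisely what membership in $\mathcal{S}_0$ guarantees about the $C_I^0(0)$: one needs enough positivity (or at least non-negativity with one strictly positive entry) to ensure $\overline{Y(t)}$ stays bounded away from zero and in fact grows, so that the ratio $e^{t\mathbb{W}^0}[\cdots]/\overline{Y(t)}$ behaves well and the lower bound via Lemma \ref{Lemma_B} is meaningful. If instead $\overline{Y(t)}$ could change sign or vanish, the entire argument collapses; the resolution is the Perron--Frobenius-type observation that $\mathbb{W}^0$ has non-negative off-diagonal entries, hence $e^{t\mathbb{W}^0}$ is entrywise positive for $t > 0$, so any non-negative, nonzero initial vector yields a strictly positive trajectory. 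Everything else is bookkeeping with the wavelet expansion and the elementary asymptotics of $t^{-k+1} e^{\delta t}$ for $\delta > 0$.
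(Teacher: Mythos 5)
Your proof is correct and follows essentially the same route as the paper: the non-emptiness of the set in Definition \ref{Def_quasispecies} is obtained exactly as in the text, by combining Hypothesis B with the upper bound $\overline{Y(t)}\leq Ct^{k-1}e^{\mu_{\max}t}$ of Lemma \ref{Lemma_B} to force the ratio $p^{-r_{0}/2}e^{t\widehat{Q_{0}}(p^{1-r_{0}})f(I_{0})}/\overline{Y(t)}$ to diverge. Your additional verification that $\overline{Y(t)}>0$ for $t>0$ (via the non-negative off-diagonal entries of $\mathbb{W}^{0}$ and a non-negative, nonzero initial vector $[C_{I}^{0}(0)]_{I\in G_{M}}$) is a point the paper leaves implicit rather than a different argument, and it is a worthwhile supplement since the lower bound on the ratio is meaningless if $\overline{Y(t)}$ could vanish or change sign.
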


\subsection{A family of Gibbs-type mutation measures}

In this section we present an infinite family of mutation measures supported
in the unit ball satisfying the Hypotheses A and B. More precisely,%
\begin{equation}
Q_{0}\left(  \left\vert x\right\vert _{p};\sigma,\alpha\right)  :=Q_{0}\left(
\left\vert x\right\vert _{p}\right)  =\mathcal{N}\Omega\left(  \left\vert
x\right\vert _{p}\right)  \exp(-\sigma\left\vert x\right\vert _{p}^{\alpha}),
\label{Mutation_Measure}%
\end{equation}
where $\sigma,\alpha>0$ , $\mathcal{N}=\int_{\mathbb{Z}_{p}}\exp
(-\sigma\left\vert x\right\vert _{p}^{\alpha})dx$, and $\Omega\left(
\left\vert x\right\vert _{p}\right)  $ is the characteristic function of the
unit ball. Notice that $Q_{0}\left(  \left\vert x\right\vert _{p}\right)  $ is
integrable. We set%
\[
Z(\xi;\sigma,\alpha):=%
%TCIMACRO{\dint \limits_{\mathbb{Q}_{p}}}%
%BeginExpansion
{\displaystyle\int\limits_{\mathbb{Q}_{p}}}
%EndExpansion
\chi_{p}\left(  \xi x\right)  \exp(-\sigma\left\vert x\right\vert _{p}%
^{\alpha})dx,
\]
for $\sigma,\alpha>0$. This is the $p$-adic heat kernel widely studied in
connection with the $p$-adic heat equation, see e.g. \cite{Kochubei},
\cite{KKZuniga}, \cite{V-V-Z}, \cite{Zuniga-LNM-2016} The heat kernel
$Z(\xi;\sigma,\alpha)$ is non-negative, continuous function in $\xi$ for any
$\sigma,\alpha>0$, see e.g. \cite[Theorem 13]{Zuniga-LNM-2016}. Furthermore,
there exist positive constants $C_{1}$, $C_{0}$ such that%
\begin{equation}
\frac{C_{1}\sigma}{\left(  \left\vert \xi\right\vert _{p}+\sigma^{\frac
{1}{\alpha}}\right)  ^{1+\alpha}}\leq Z(\xi;\sigma,\alpha)\leq\frac
{C_{0}\sigma}{\left(  \left\vert \xi\right\vert _{p}+\sigma^{\frac{1}{\alpha}%
}\right)  ^{1+\alpha}}, \label{Bound_Heat_Ker}%
\end{equation}
for $\sigma,\alpha>0$, and $\xi\in\mathbb{Q}_{p}$. The upper bound was
established in \cite[Lemma 4.1]{Kochubei}, see also \cite[Theorem
32]{Zuniga-LNM-2016}. The lower bound was established in \cite[Theorem
5.17]{Bendikov et al}.\ In particular, $Z(\cdot;\sigma,\alpha)\in L^{1}$.

Now,%
\begin{align*}
\mathcal{F}_{x\rightarrow\xi}(Q_{0}\left(  \left\vert x\right\vert _{p}%
;\sigma,\alpha\right)  )  &  =\mathcal{NF}_{x\rightarrow\xi}\left(
\exp(-\sigma\left\vert x\right\vert _{p}^{\alpha})\text{\ }\Omega\left(
\left\vert x\right\vert _{p}\right)  \right)  =\mathcal{N}Z(\xi;\sigma
,\alpha)\ast\Omega\left(  \left\vert \xi\right\vert _{p}\right) \\
&  =\mathcal{N}%
%TCIMACRO{\dint \limits_{\mathbb{Q}_{p}}}%
%BeginExpansion
{\displaystyle\int\limits_{\mathbb{Q}_{p}}}
%EndExpansion
Z(\xi-y;\sigma,\alpha)\Omega\left(  \left\vert y\right\vert _{p}\right)  dy,
\end{align*}
and by using the lower bound in (\ref{Bound_Heat_Ker}), and the ultrametric
property of $\left\vert \cdot\right\vert $, and assuming that $\left\vert
\xi\right\vert _{p}>1$,
\begin{multline*}
\mathcal{F}_{x\rightarrow\xi}(Q_{0}\left(  \left\vert x\right\vert _{p}%
;\sigma,\alpha\right)  )\geq\mathcal{N}C_{1}\sigma%
%TCIMACRO{\dint \limits_{\mathbb{Q}_{p}}}%
%BeginExpansion
{\displaystyle\int\limits_{\mathbb{Q}_{p}}}
%EndExpansion
\frac{\Omega\left(  \left\vert y\right\vert _{p}\right)  dy}{\left(
\left\vert \xi-y\right\vert _{p}+\sigma^{\frac{1}{\alpha}}\right)  ^{1+\alpha
}}\\
\geq\mathcal{N}C_{1}\sigma%
%TCIMACRO{\dint \limits_{\substack{y\in\mathbb{Z}_{p}\\\left\vert
%\xi\right\vert _{p}>\left\vert y\right\vert _{p}}}}%
%BeginExpansion
{\displaystyle\int\limits_{\substack{y\in\mathbb{Z}_{p}\\\left\vert
\xi\right\vert _{p}>\left\vert y\right\vert _{p}}}}
%EndExpansion
\frac{\Omega\left(  \left\vert y\right\vert _{p}\right)  dy}{\left(
\left\vert \xi-y\right\vert _{p}+\sigma^{\frac{1}{\alpha}}\right)  ^{1+\alpha
}}=\mathcal{N}C_{1}\sigma%
%TCIMACRO{\dint \limits_{\substack{y\in\mathbb{Z}_{p}\\\left\vert
%\xi\right\vert _{p}>\left\vert y\right\vert _{p}}}}%
%BeginExpansion
{\displaystyle\int\limits_{\substack{y\in\mathbb{Z}_{p}\\\left\vert
\xi\right\vert _{p}>\left\vert y\right\vert _{p}}}}
%EndExpansion
\frac{\Omega\left(  \left\vert y\right\vert _{p}\right)  dy}{\left(
\left\vert \xi\right\vert _{p}+\sigma^{\frac{1}{\alpha}}\right)  ^{1+\alpha}%
}\\
=\frac{\mathcal{N}C_{1}\sigma}{\left(  \left\vert \xi\right\vert _{p}%
+\sigma^{\frac{1}{\alpha}}\right)  ^{1+\alpha}}%
%TCIMACRO{\dint \limits_{y\in\mathbb{Z}_{p}}}%
%BeginExpansion
{\displaystyle\int\limits_{y\in\mathbb{Z}_{p}}}
%EndExpansion
dy=\frac{\mathcal{N}C_{1}\sigma}{\left(  \left\vert \xi\right\vert _{p}%
+\sigma^{\frac{1}{\alpha}}\right)  ^{1+\alpha}}\text{ for }\left\vert
\xi\right\vert _{p}>1\text{.}%
\end{multline*}
Then, we have the following result:

\begin{lemma}
\label{Lemma_F}Take $\sigma,\alpha>0$ as before. Then%
\[
\mathcal{F}_{x\rightarrow\xi}(Q_{0}\left(  \left\vert x\right\vert _{p}%
;\sigma,\alpha\right)  )\geq\frac{\mathcal{N}C_{1}\sigma}{\left(  \left\vert
\xi\right\vert _{p}+\sigma^{\frac{1}{\alpha}}\right)  ^{1+\alpha}}\text{ for
}\left\vert \xi\right\vert _{p}>1\text{.}%
\]

\end{lemma}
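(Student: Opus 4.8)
The plan is to obtain the bound as a direct consequence of the two-sided heat-kernel estimate $(\ref{Bound_Heat_Ker})$ together with the ultrametric inequality, organizing into a proof the computation already carried out just above the statement. First I would record the convolution formula for $\widehat{Q_{0}}$. Writing $Q_{0}=\mathcal{N}\,\exp(-\sigma|x|_{p}^{\alpha})\,\Omega(|x|_{p})$ as a product of $\exp(-\sigma|x|_{p}^{\alpha})$, whose Fourier transform is $Z(\xi;\sigma,\alpha)$ by the very definition of the $p$-adic heat kernel, and of $\Omega(|x|_{p})=1_{\mathbb{Z}_{p}}$, whose Fourier transform is again $\Omega(|\xi|_{p})$, the product-to-convolution rule $\widehat{fg}=\widehat{f}\ast\widehat{g}$ on $\mathbb{Q}_{p}$ gives
\[
\widehat{Q_{0}}(\xi)=\mathcal{N}\int_{\mathbb{Q}_{p}}Z(\xi-y;\sigma,\alpha)\,\Omega(|y|_{p})\,dy=\mathcal{N}\int_{\mathbb{Z}_{p}}Z(\xi-y;\sigma,\alpha)\,dy .
\]
Since $Z(\cdot;\sigma,\alpha)$ is continuous and $\Omega(|\cdot|_{p})$ is integrable with compact support, the right-hand side is a continuous function of $\xi$, so this identity and the estimate below are valid pointwise.

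Next I would use the ultrametric structure. Fix $\xi$ with $|\xi|_{p}>1$. Then for every $y\in\mathbb{Z}_{p}$ one has $|y|_{p}\le 1<|\xi|_{p}$, and the ultrametric property yields the equality $|\xi-y|_{p}=\max\{|\xi|_{p},|y|_{p}\}=|\xi|_{p}$. Inserting this into the lower bound of $(\ref{Bound_Heat_Ker})$,
\[
Z(\xi-y;\sigma,\alpha)\geq\frac{C_{1}\sigma}{\bigl(|\xi-y|_{p}+\sigma^{\frac{1}{\alpha}}\bigr)^{1+\alpha}}=\frac{C_{1}\sigma}{\bigl(|\xi|_{p}+\sigma^{\frac{1}{\alpha}}\bigr)^{1+\alpha}},
\]
a quantity independent of $y\in\mathbb{Z}_{p}$. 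Integrating over $\mathbb{Z}_{p}$ and using $\int_{\mathbb{Z}_{p}}dy=1$ then yields
\[
\widehat{Q_{0}}(\xi)\geq\frac{\mathcal{N}C_{1}\sigma}{\bigl(|\xi|_{p}+\sigma^{\frac{1}{\alpha}}\bigr)^{1+\alpha}}\int_{\mathbb{Z}_{p}}dy=\frac{\mathcal{N}C_{1}\sigma}{\bigl(|\xi|_{p}+\sigma^{\frac{1}{\alpha}}\bigr)^{1+\alpha}},
\]
which is the claimed inequality.

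The argument contains no genuinely hard step; the only point deserving care is the product-to-convolution identity for $\widehat{Q_{0}}$, for which one should recall that the Fourier transform of $1_{\mathbb{Z}_{p}}$ equals $1_{\mathbb{Z}_{p}}$ and that every function appearing lies in $L^{1}\cap L^{2}$, so that the Fourier transforms and their convolution are unambiguously defined. It is worth stressing that the hypothesis $|\xi|_{p}>1$ is essential: it is precisely what makes $|\xi-y|_{p}=|\xi|_{p}$ constant in $y$ over $\mathbb{Z}_{p}$, so that the pointwise lower bound for $Z$ can be pulled out of the integral; for $|\xi|_{p}\le 1$ this cancellation breaks down and $\widehat{Q_{0}}$ must be analyzed differently (indeed $\widehat{Q_{0}}(0)=1$). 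If one prefers to avoid the convolution formula entirely, an alternative is to expand $\widehat{Q_{0}}(\xi)=\mathcal{N}\sum_{\gamma\geq 0}e^{-\sigma p^{-\gamma\alpha}}\int_{|x|_{p}=p^{-\gamma}}\chi_{p}(\xi x)\,dx$ and estimate the resulting $p$-adic Gaussian sums directly, but the route through $(\ref{Bound_Heat_Ker})$ is shorter and is the one implicit in the text above.
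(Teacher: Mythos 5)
Your proof is correct and follows essentially the same route as the paper: the convolution identity $\widehat{Q_{0}}=\mathcal{N}\,Z(\cdot;\sigma,\alpha)\ast\Omega(|\cdot|_{p})$, the lower bound in (\ref{Bound_Heat_Ker}), and the ultrametric equality $|\xi-y|_{p}=|\xi|_{p}$ for $y\in\mathbb{Z}_{p}$ and $|\xi|_{p}>1$, followed by integration over $\mathbb{Z}_{p}$. Your version is in fact slightly cleaner, since the paper's intermediate restriction to $\{y\in\mathbb{Z}_{p}:|\xi|_{p}>|y|_{p}\}$ is the whole of $\mathbb{Z}_{p}$ under the stated hypothesis.
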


On the other hand, since $\left(  \mathbb{Q}_{p},\left\vert \cdot\right\vert
_{p}\right)  $ is a Polish space, a complete, separable metric space, every
probability measure is tight, see e.g. \cite[Proposition 1.3.24]{Kondratiev et
al}, which implies that given $\epsilon>0$, there exists a compact subset
$K_{\epsilon}\subset\mathbb{Z}_{p}$ such that $\int_{K_{\epsilon}}Q_{0}\left(
\left\vert x\right\vert _{p}\right)  dx>1-\epsilon$. Now since $K_{\epsilon}$
is bounded, there exists a non-negative integer $M$ such that $K_{\epsilon
}\subset p^{M}\mathbb{Z}_{p}$, and consequently,%
\begin{equation}%
%TCIMACRO{\dint \limits_{p^{M}\mathbb{Z}_{p}}}%
%BeginExpansion
{\displaystyle\int\limits_{p^{M}\mathbb{Z}_{p}}}
%EndExpansion
Q_{0}\left(  \left\vert x\right\vert _{p}\right)  dx\geq%
%TCIMACRO{\dint \limits_{K_{\epsilon}}}%
%BeginExpansion
{\displaystyle\int\limits_{K_{\epsilon}}}
%EndExpansion
Q_{0}\left(  \left\vert x\right\vert _{p}\right)  dx>1-\epsilon.
\label{Fixing_M}%
\end{equation}
By choosing $\epsilon$ so that $1-\epsilon\in\left(  \frac{1}{2},1\right)  $,
the Hypothesis A is satisfied. Notice that the integer $M$ depends on
$\epsilon,\sigma,\alpha$.

Now we proceed to analyze Hypothesis B. By using Lemma \ref{Lemma_F},%
\begin{equation}
\widehat{Q_{0}}\left(  p^{1-r_{0}}\right)  f(I_{0})\geq\frac{\mathcal{N}%
C_{1}\sigma}{\left(  p^{1-r_{0}}+\sigma^{\frac{1}{\alpha}}\right)  ^{1+\alpha
}}\text{,} \label{Eq_INequality_1}%
\end{equation}
since $r_{0}<0$, $\xi=p^{r_{0}-1}$satisfies $\left\vert \xi\right\vert _{p}%
>1$.$\ $Now, we take $\sigma\geq p^{\frac{1-r_{0}}{\alpha}}$, from
(\ref{Eq_INequality_1}) we have%
\[
\frac{\mathcal{N}C_{1}\sigma}{\left(  p^{1-r_{0}}+\sigma^{\frac{1}{\alpha}%
}\right)  ^{1+\alpha}}\geq\frac{\mathcal{N}C_{1}\sigma}{\left(  2\sigma
^{\frac{1}{\alpha}}\right)  ^{1+\alpha}}=\frac{\mathcal{N}C_{1}}{2^{1+\alpha
}\sigma^{\frac{1}{\alpha}}}.
\]
Finally, the Hypothesis B is satisfied if by taking $\frac{\mathcal{N}C_{1}%
}{2^{1+\alpha}\sigma^{\frac{1}{\alpha}}}>\mu_{\max}$, i.e. if
\begin{equation}
\sigma<\sigma_{\max}=:\left[  \frac{\mathcal{N}C_{1}}{2^{1+\alpha}\mu_{\max}%
}\right]  ^{\alpha}. \label{Fixinf_sigma}%
\end{equation}
Given $\alpha>0$, we pick $\sigma>0$ satisfying (\ref{Fixinf_sigma}), i.e.
Hypothesis B is satisfied. Now for $\alpha,\sigma$ fixed, we pick $\epsilon$
so that $1-\epsilon\in\left(  \frac{1}{2},1\right)  $, then there exists an
integer $M$ such that (\ref{Fixing_M}) holds true, i.e. Hypothesis A is
satisfied. Then we have the following result.

\begin{theorem}
\label{Theorem_E}Assume that $X(x,0)\in\mathcal{S}_{0}$, and that the mutation
measure is as in (\ref{Mutation_Measure}), with $\sigma\in\left(
0,\sigma_{\max}\right)  $, $\alpha\in\left(  0,\infty\right)  $, $\int
_{p^{M}\mathbb{Z}_{p}}Q_{0}\left(  \left\vert x\right\vert _{p}\right)
dx\in\left(  \frac{1}{2},1\right)  $, $M=M(\sigma,\alpha)$, then the Cauchy
problem (\ref{eq6}) admits a quasispecies solution.
\end{theorem}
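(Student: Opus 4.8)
The plan is to obtain the statement as a corollary of Theorem~\ref{Theorem_D}, whose conclusion is precisely a quasispecies solution of (\ref{eq6}) and whose hypotheses are: $X(x,0)\in\mathcal{S}_{0}$, together with Hypotheses~A and~B for the mutation measure. Since $X(x,0)\in\mathcal{S}_{0}$ is assumed here, the whole task reduces to verifying that the Gibbs-type measure $Q_{0}(|x|_{p};\sigma,\alpha)$ of (\ref{Mutation_Measure}), with the stated ranges of $\sigma$ and $\alpha$ and with $M=M(\sigma,\alpha)$, satisfies Hypotheses~A and~B; Theorem~\ref{Theorem_D} then closes the argument.

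For Hypothesis~A I would use the tightness argument already sketched before the statement. The measure $Q_{0}(|x|_{p};\sigma,\alpha)\,dx$ is a Borel probability measure supported in $\mathbb{Z}_{p}$ and $(\mathbb{Q}_{p},|\cdot|_{p})$ is a Polish space, so it is tight: for each $\epsilon>0$ there is a compact $K_{\epsilon}\subset\mathbb{Z}_{p}$ with $\int_{K_{\epsilon}}Q_{0}\,dx>1-\epsilon$, and since $K_{\epsilon}$ is bounded it lies inside $p^{M}\mathbb{Z}_{p}$ for a suitable integer $M=M(\epsilon,\sigma,\alpha)$, giving $\int_{p^{M}\mathbb{Z}_{p}}Q_{0}\,dx\ge\int_{K_{\epsilon}}Q_{0}\,dx>1-\epsilon$ as in (\ref{Fixing_M}), together with the strict inequality $\int_{p^{M}\mathbb{Z}_{p}}Q_{0}\,dx<1$ since $Q_{0}>0$ on $\mathbb{Z}_{p}\setminus p^{M}\mathbb{Z}_{p}$. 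Choosing $\epsilon$ with $1-\epsilon\in(1/2,1)$ produces $\int_{p^{M}\mathbb{Z}_{p}}Q_{0}\,dx\in(1/2,1)$ with $M=M(\sigma,\alpha)$, which is exactly Hypothesis~A and the condition displayed in the statement.

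For Hypothesis~B I would use Lemma~\ref{Lemma_F}. Writing $\mathcal{F}_{x\to\xi}(Q_{0}(|x|_{p};\sigma,\alpha))=\mathcal{N}\,Z(\xi;\sigma,\alpha)\ast\Omega(|\xi|_{p})$ and feeding in the lower heat-kernel bound (\ref{Bound_Heat_Ker}), Lemma~\ref{Lemma_F} gives $\widehat{Q_{0}}(|\xi|_{p})\ge \mathcal{N}C_{1}\sigma\,(|\xi|_{p}+\sigma^{1/\alpha})^{-(1+\alpha)}$ for $|\xi|_{p}>1$. Fix $\alpha>0$, pick a negative integer $r_{0}$ and $I_{0}\in G_{M}$ with $f(I_{0})=\max_{I\in G_{M}}f(I)$, and evaluate at $\xi=p^{r_{0}-1}$, where $|\xi|_{p}=p^{1-r_{0}}>1$; requiring in addition $\sigma^{1/\alpha}\ge p^{1-r_{0}}$ bounds the denominator by $(2\sigma^{1/\alpha})^{1+\alpha}$, so $\widehat{Q_{0}}(p^{1-r_{0}})f(I_{0})\ge \mathcal{N}C_{1}f(I_{0})\,2^{-(1+\alpha)}\sigma^{-1/\alpha}$, and this exceeds $\mu_{\max}$ precisely when $\sigma<\sigma_{\max}$, as in (\ref{Fixinf_sigma}). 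Hence Hypothesis~B holds, and Theorem~\ref{Theorem_D} yields the quasispecies solution.

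The real point of the proof --- and the step I expect to cost the most care --- is not any single estimate but the compatibility and ordering of the parameter choices. The estimate behind Hypothesis~B needs $\sigma^{1/\alpha}\ge p^{1-r_{0}}$ (so $\sigma$ cannot be too small), Hypothesis~B itself needs $\sigma<\sigma_{\max}$, and $\sigma_{\max}$ is defined through $\mathcal{N}=\int_{\mathbb{Z}_{p}}\exp(-\sigma|x|_{p}^{\alpha})\,dx$, hence depends on $\sigma$. One neutralizes that last dependence by noting $1\le\mathcal{N}\le e^{\sigma}$ (since $0\le|x|_{p}^{\alpha}\le 1$ on $\mathbb{Z}_{p}$), so on any bounded range of $\sigma$ the quantity $\sigma_{\max}$ stays uniformly bounded below; one then fixes the data in the order $\alpha$ first, then the pair $(r_{0},\sigma)$ satisfying $p^{1-r_{0}}\le\sigma^{1/\alpha}$ and $\sigma<\sigma_{\max}$ for Hypothesis~B, and finally $M=M(\sigma,\alpha)$ for Hypothesis~A --- the order implicit in the discussion preceding the statement. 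Once these choices are pinned down consistently, the invocation of Theorem~\ref{Theorem_D} is immediate.
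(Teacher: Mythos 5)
Your proof follows essentially the same route as the paper: verify Hypothesis A via tightness of the probability measure on the Polish space $(\mathbb{Q}_{p},|\cdot|_{p})$ (yielding $M=M(\sigma,\alpha)$ as in (\ref{Fixing_M})) and Hypothesis B via Lemma~\ref{Lemma_F} together with the choices $\sigma^{1/\alpha}\ge p^{1-r_{0}}$ and $\sigma<\sigma_{\max}$, then invoke Theorem~\ref{Theorem_D}. Your closing remark about the mutual compatibility and ordering of the constraints on $\sigma$ (and the dependence of $\mathcal{N}$, and implicitly of $\mu_{\max}$ through $\mathbb{W}^{0}$, on the parameters) identifies a point the paper itself passes over silently, so that extra care is warranted rather than a deviation from its argument.
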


\end{document}